\documentclass[11pt]{article}
\usepackage{amssymb,amsmath, amsthm, setspace, color, graphics, fullpage}
\usepackage{mathpazo, mathptmx, flexisym}
\usepackage{fancyhdr}
\usepackage[authoryear]{natbib}
\usepackage{url}
\usepackage{breqn}
\usepackage[margin=1in]{geometry} 
\usepackage{setspace}             
\usepackage[all, cmtip]{xy}
\usepackage[english]{babel}
\usepackage{float}
\usepackage[section]{placeins}
\usepackage{mathtools}
\usepackage{caption}
\usepackage{subcaption}
\usepackage{tikz}
\usepackage{pdfpages}
\usetikzlibrary{arrows.meta,patterns,positioning}
\usepackage[pdfborder={0 0 0},final=true,colorlinks=true,linkcolor=magenta,citecolor=blue]{hyperref}
\usepackage[capitalize,nameinlink,noabbrev]{cleveref}
\numberwithin{equation}{section}
\newcommand{\tab}{\hspace*{2em}}              
\providecommand{\Macro}[2]{\expandafter\def\csname #1\endcsname{#2}}

\newcommand{\bm}[1]{\mathbf{#1}}                     
\newcommand{\diadt}[2]{t^{(#1)}_{#2}}                

\Macro{xj}{\bm{x}_{\diadt{n}{j}}}                              
\Macro{xjplus1}{\bm{x}_{\diadt{n}{{j+1}}}}                              
\Macro{xTj}{\bm{x}^T_{\diadt{n}{j}}}                           
\Macro{xTjplus1}{\bm{x}^T_{\diadt{n}{{j+1}}}}                           
\Macro{zj}{\bm{z}_{\diadt{n}{j}}}                              
\Macro{zjplus1}{\bm{z}_{\diadt{n}{{j+1}}}}                              
\Macro{zTj}{\bm{z}^T_{\diadt{n}{j}}}                           
\Macro{zTjplus1}{\bm{z}^T_{\diadt{n}{{j+1}}}}                           
\Macro{Mxj}{M_{\bm{x}_{\diadt{n}{j}}}}                         
\Macro{Mxjplus1}{M_{\bm{x}_{\diadt{n}{{j+1}}}}}                         
\Macro{Mzj}{M_{\bm{z}_{\diadt{n}{j}}}}                         
\Macro{Mzjplus1}{M_{\bm{z}_{\diadt{n}{{j+1}}}}}
\Macro{SxO}{S\times\Omega}                           
\Macro{SigmaSxO}{\mathfrak{S}\times\mathfrak{O}}     
\Macro{Lp}{L^p_{}(X,\mathfrak{B}(X),m)}              
\Macro{L2}{L^2_{}(X,\mathfrak{B}(X),m)}
\Macro{LinftyS}{L^{\infty}_{}(S \times \Omega,\mathfrak{S}\times\mathfrak{O},\mu)}
\Macro{LinftyS*}{L^{\infty}_{}(S \times \Omega,\mathfrak{S}\times\mathfrak{O},\mu)^*}
\Macro{Mvof}{\mathfrak{M}_{v \circ f}^{}}              
\Macro{ImPhi}{\text{Im}\;\phi}                         
\Macro{KerPhi}{\text{Ker}\;\phi}                       
\Macro{GQKerPhi}{G\backslash\text{Ker}\phi}            
\Macro{CoMap}{\alpha\circ\delta\circ\gamma}             
\Macro{GrpA}{(G,*)}                                    
\Macro{GrpO}{(G^\prime,\diamond)}                      
\theoremstyle{plain}

\newtheorem{cor}{Corollary}
\newtheorem{prop}{Proposition}
\newtheorem{lem}{Lemma}
\newtheorem{assumption}{Assumption}      

\theoremstyle{definition}
\newtheorem{defn}{Definition}[section]
\crefname{prop}{Proposition}{Propositions}
\Crefname{prop}{Proposition}{Propositions}
\crefname{lem}{Lemma}{Lemmas}
\Crefname{lem}{Lemma}{Lemmas}
\crefname{cor}{Corollary}{Corollaries}
\Crefname{cor}{Corollary}{Corollaries}

\theoremstyle{remark}
\newtheorem{rem}{Remark}[section]

\newcommand\reducefonten{\fontsize{10}{9}\selectfont}
\begin{document}
\singlespacing
\title{\textbf{Delegated Monitoring in Public--Private Credit Programs: Underinvestment, Overinvestment, and the Design of Subsidized Lending}}
\author{ G. Charles-Cadogan \thanks{University of Leicester, School of Business, Division of Accounting \& FinanceSchool ofAccounting \& Finance, School of Business, Leicester, LE2 1RQ; e-mail: \textcolor[rgb]{0.00,0.00,1.00}{\href{mailto:gocadog@gmail.com}{gocadog@gmail.com}}~\\~\\
	I thank John A. Cole for helpful comments. Any errors which may remain are my own.}
}
\date{\today}
\renewcommand\thefootnote{\fnsymbol{footnote}}
\maketitle
\renewcommand\thefootnote{\arabic{footnote}}
\thispagestyle{empty}
\begin{abstract}
\noindent This paper studies public--private partnerships that delegate access-to-credit programs to private equity and venture-capital intermediaries. The public sector seeks to relax credit rationing and expand lending to socially valuable firms, while delegated monitors screen applicants, allocate subsidized loans, and bear agency costs. The paper develops a mechanism-design model showing that the same delegated intermediation structure can generate both Stiglitz--Weiss underinvestment and De Meza--Webb overinvestment distortions. When screening is imperfect, interest-rate sorting may exclude creditworthy target firms. When subsidies weaken monitoring and repayment incentives, high-risk firms may obtain excessive credit. The operative distortion is determined by monitoring curvature and subsidy intensity. The model further predicts that the feasible spread between loan contracts narrows as expected risk increases. A sequential extension incorporates credit scoring and Bayesian updating, showing that welfare loss from misclassification is minimized when monitoring resources are allocated according to the marginal effect of borrower characteristics on posterior risk classification. Because granular borrower-level data are unavailable, the empirical component provides descriptive state-year evidence from SBA SBIC reports (2018--2025) and a calibrated simulation illustrating that the model's comparative statics are empirically recoverable. The empirical analysis is presented as design validation rather than as a causal test.
\\
\\
\emph{Keywords}: public-private partnership, mechanism design, venture capitalist, delegated monitoring, agency cost
\\
\\
\emph{JEL Classification Codes}: D82, D86
\end{abstract}

\newpage
\pagenumbering{arabic}
\renewcommand\thefootnote{\fnsymbol{footnote}}
\renewcommand\thefootnote{\arabic{footnote}}
\doublespacing
\section{Introduction}\label{sec:Intro}
\noindent Credit rationing remains a central motivation for public intervention in small-firm finance. In the canonical \citet{StiglitzWeiss1981} environment, asymmetric information prevents the interest rate from clearing the credit market because higher rates worsen borrower composition and borrower incentives. The policy response is often to use guarantees, subsidized loans, certification, or co-funding arrangements to expand access to credit. These instruments can be valuable when viable firms are discouraged from applying or are screened out by private lenders \citep{ColeSokolyk2016,CowlingLiuMinnetiZhang2016,FerrandoMulier2022}. Yet they also create a contracting problem: the public sector wants additional lending to socially valuable firms, while private delegated monitors retain their own risk, effort, and screening incentives.

\tab This paper studies that contracting problem in a public-private partnership (PPP) in which the public sector delegates loan allocation to private equity or venture-capital intermediaries. The delegated monitor observes imperfect signals of firm type, allocates subsidized credit, and can audit borrower declarations. This structure captures programs in which public funds or guarantees are combined with private screening and monitoring. The central result is a public-private partnership puzzle. Delegated monitoring can generate underinvestment when interest-rate sorting and imperfect screening exclude target firms; but subsidized credit can also generate overinvestment when high-risk firms obtain excessive subsidized funding. Benevolent public intent therefore does not by itself imply an efficient access-to-credit mechanism.

\tab The paper makes three contributions. First, it connects credit-rationing theory with delegated monitoring in PPP finance. Second, its novel theoretical contribution is to show that the same delegated public-private credit mechanism can embed both Stiglitz-Weiss underinvestment and De Meza-Webb overinvestment margins, with the operative distortion determined by monitoring curvature and subsidy intensity \citep{DeMazaWebb1987,DeMezaWebb1992}. Third, it extends the static mechanism to a sequential contract in which the delegated monitor updates firm type using credit-scoring information and realized performance. The sequential model clarifies how effort, auditing, and posterior classification probabilities affect the welfare loss from misclassification.

\subsection{Related literature}\label{subsec:RelatedLiterature}

\tab The paper is related first to the literature on credit rationing, discouraged borrowers, and SME access to finance. The classic credit-rationing mechanism of \citet{StiglitzWeiss1981} remains the benchmark for understanding why viable firms may not receive credit. Subsequent work documents that discouraged borrowers can be observationally important and that many such firms would receive credit if they applied \citep{ColeSokolyk2016,CowlingLiuMinnetiZhang2016,FerrandoMulier2022,CowlingLiuZhang2022}. These results motivate public intervention, but they do not by themselves determine whether the intervention should be direct public lending, guarantees, or delegation to private monitors.

\tab Second, the paper contributes to work on public credit guarantees and public risk-bearing. Credit guarantees are now among the most widely used policy instruments for SME finance \citep{BeckKlapperMen2010,OECD2017,WorldBankFIRST2015,CrawfordCuiKewley2024}. Existing work emphasizes additionality, fiscal sustainability, counterfactual evaluation, and the possibility that guarantees alter lender and borrower incentives \citep{Cowling2010,OnoUesugiYasuda2013,AnginerTorreIze2014,CarlettiLeonelloMarquez2023,MarshSharma2024}. The present paper complements that literature by modeling how delegated private screening interacts with public subsidy design.

\tab Third, the paper builds on delegated monitoring, venture-capital contracting, and credit-scoring literatures. Delegated monitoring is central to financial intermediation \citep{LelandPyle1977,Diamond1984,Chan1983,BesankoKanatas1993}, while venture capitalists supply monitoring, selection, and governance in opaque firms \citep{Sahlman1988,Sahlman1990,AdmatiPfliederer1994,GompersGornallKaplanStrebulaev2020,BernsteinGiroudTownsend2022}. The sequential component of the model is linked to credit scoring and information production, including the use of hard information, soft information, and digital footprints in lending \citep{FrameSrnivasanWoosley2001,DeYoungGlennonNigro2008,LibertiPetersen2021,BergBurgGombovicPuri2021,FusterGoldsmithPinkhamRamadoraiWalther2022}.

\tab Finally, the mechanism-design structure follows the public procurement and regulation tradition in \citet{LaffontTirole1993}. The distinctive feature here is that the principal is not simply procuring a good or regulating a firm. The public sector delegates access-to-credit decisions to a private intermediary whose screening incentives determine whether public funds reach target firms or are captured by non-target firms.

\tab In this paper, information is communicated to the delegated lender through borrower declarations and observable firm characteristics. The firm signals its type by completing an application, and the delegated lender uses decision rules to assign firms to risk categories. The declaration can be audited, but monitoring is costly. Agency costs therefore consist of monitoring expenditures, bonding requirements, and residual losses when borrowers strategically misrepresent type \citep{JensenMeckling1976}. The rest of the paper proceeds as follows. \autoref{sec:TheModel} introduces the model. \cref{sec:OptimalContractFirstBest} studies the symmetric-information benchmark, and \autoref{sec:OptimalContractSecondBest} studies asymmetric information. \cref{sec:SequentialContracts} extends the analysis to sequential contracts with credit scoring and posterior updating. \autoref{sec:Conclusion} concludes. Proofs are collected in the Appendix.

\section{The Model}\label{sec:TheModel}
This section develops a mechanism-design model of a public-private credit program. The public sector is risk neutral and delegates loan allocation to risk-averse venture capitalists or portfolio managers who screen risk-averse, liquidity-constrained firms. The model identifies when benevolent public objectives fail to implement truthful revelation and how subsidies interact with firm type, loan size, and monitoring costs.

The equilibrium analysis is intentionally partial. The paper characterizes incentive-compatible separating and semi-separating allocation conditions within a direct-revelation screening environment; it does not claim to characterize every perfect Bayesian equilibrium of an unrestricted dynamic game. Separating allocations are unique only under the maintained single-crossing, concavity, and convex agency-cost restrictions that make the relevant objective single-peaked. When those restrictions fail, the admissible allocation is set-valued, and the paper's claim is limited to feasibility and comparative statics rather than uniqueness. The transfer or subsidy is treated as a policy instrument rather than as the outcome of a political-choice problem. The welfare criterion is a normative social-surplus benchmark: it is politically implementable only to the extent that program statutes or agency rules commit the public sector to observable target weights, monitoring rules, and budget constraints. A full political-economy model of subsidy choice is therefore outside the scope of the paper.

The notation $\lambda$ is used for classification probabilities at three levels of the model. The distinction is summarized as follows:
\begin{table}[H]
\centering
\caption{Classification-probability notation}\label{tab:LambdaNotation}
\begin{tabular}{p{0.22\textwidth}p{0.28\textwidth}p{0.40\textwidth}}
\hline
Notation & Role & Interpretation \\
\hline
$\overline{\lambda},\underline{\lambda}$ & Static prior probabilities & The delegated lender's prior probability that an applicant is high-risk or low-risk before the sequential performance signal is observed. \\
$\lambda(\pmb{c})$ & Classification probability & The probability generated by a credit-scoring rule from borrower characteristics $\pmb{c}$. This object is endogenous to the classification technology specified by the lender. \\
$\lambda(\pmb{c}\mid s)$ & Posterior probability & The updated probability after observing a performance signal $s\in\{s_G,s_B\}$ and applying Bayes' rule. This is the probability relevant for misclassification and welfare loss. \\
\hline
\end{tabular}
\end{table}

Consider an economy with $N$ firms, each with the quasi-linear utility function
\begin{equation}
   U_i(x;\theta_i;y_i)=V_i(x,\theta_i)+y_i,\quad i=1,2,\dotsc ,N\label{eq:QuasiLinearUtilityModel}
\end{equation}
where $x$ is the type-dependent loan made available by the delegated manager, $\theta_i$ is firm $i$'s true type, and $y_i$ is a transfer generated by preferential contracting or certification. The function $V_i$ is concave in loan size, with $V_i^\prime>0$ and $V_i^{\prime\prime}<0$. From the public sector's perspective, firm surplus is the social benefit induced by the loan-transfer mechanism. Utility is separable in the loan $x$ and the transfer $y_i$, so the transfer required by the firm is independent of the loan amount. The authorities know $V_i$ but not $\theta_i$, and they make an amount $x^*$ available through low-interest or forgivable loans. A mechanism is therefore needed to allocate $x^*$ across firms. In what follows, ``government,'' ``delegated venture capitalists,'' and ``delegated portfolio managers'' refer to the public-private allocation arrangement unless a distinction is needed.
\singlespacing
\begin{defn}[Allocation Mechanism]\label{def:AllocationMechanism}
Let $X$ be the set of all possible loan allocations that the authorities may choose in the economy i.e. $X$ is a vector-subspace in $\mathbb{R}^N_+\backslash \{0\}$, so that
   \begin{equation}
      X=\{\pmb{x}|\;\pmb{x}=(x^1,x^2,\dotsc,x^N)\in \mathbb{R}^N_+\backslash \{0\}\};
   \end{equation}
   $\pmb{\Theta}$ be the set of all possible types of firms i.e.
   \begin{equation}
       \pmb{\Theta}=\{\theta_i|\;\theta_i\;\; \text{is $i$-th firm type},\;i=1,2,\dotsc,N\};
   \end{equation}
   $S_i$ be the set of strategies available for the $i$-th firm i.e.
   \begin{align}
      S_i&=\{\hat{\theta}_i|\hat{\theta}_i\gtreqqless\theta\}\label{eq:StrategySpace}\\
      \intertext{where $\hat{\theta}_i$ is a noisy signal of firm $i$'s true type, i.e.}
      \hat{\theta}_i&=\theta_i+\epsilon_i
   \end{align}
   for some unobserved factor $\epsilon_i$. The allocation mechanism is defined as a payoff function that maps elements of the strategy space into the set of loan allocations. That is,
   \begin{equation}
      y_i:S_1\times S_2\times\dotsc\times S_N\rightarrow X\label{eq:AllocationMechanismPayoffFunc}
   \end{equation}
\end{defn}
\begin{rem}
   See \cite{Myerson1989} and \cite{Ledyard1989} for more technical mechanism designs.\qed
\end{rem}
\doublespacing
\begin{center}
  \sc{\textbf{Public sector objectives}}
\end{center}
\noindent The government's objective is to choose an optimal allocation of loans through delegated investment companies while possessing incomplete information. Let $C(\pmb{x})$ be the cost of monitoring this program, and $\pmb{\theta}=(\theta_1,\theta_2,\dotsc,\theta_N)$ be a vector of types, so that
\begin{equation}
    \pmb{x}^*(\pmb{\theta})\in\arg \max_{\pmb{x}\in X} \biggl\{ \sum_{i=1}^N V_i(x^i,\;\theta_i)-C(\pmb{x}(\pmb{\theta}))\biggr\}\label{eq:OptimalAllocation}
\end{equation}
where the maximand is the government's social welfare function. In this setup, the government extracts aggregate surplus from the $N$ firms. Each firm reports a type $\hat{\theta}$ to the delegated investment company, for example through a loan application or questionnaire. The reported type may differ from the true type $\theta$. Based on reported information, the government chooses a transfer $y_i(\hat{\theta}_1,\hat{\theta}_2,\dotsc,\hat{\theta}_N)$ to offset externalities associated with firm size, ownership characteristics, or informational opacity. For instance, women-owned business enterprises may be locked out of networks that generate contract opportunities, while small disadvantaged businesses may face weak collateral or discriminatory lending frictions \citep{BatesBradford2008,JacksonBates2013,ColeSokolyk2016}.
\begin{figure}[!htbp]
   \centering
   \captionof{figure}{Government's cost structure}
   \label{fig:PubPrivPrinCostStructure}
   \begin{tikzpicture}[scale=0.9,>=Latex]
      \draw[->] (0,0) -- (6.4,0) node[below] {Loan size};
      \draw[->] (0,0) -- (0,4.4) node[left] {Costs};
      \node[below left] at (0,0) {$0$};
      \draw[thick] (0.35,0.85) -- (6.0,0.85) node[right] {Marginal transaction cost};
      \draw[thick,dashed] plot[smooth] coordinates {(0.4,0.25) (1.3,0.42) (2.4,0.85) (3.5,1.55) (4.7,2.55) (5.9,3.75)};
      \node[right] at (5.65,3.55) {Marginal monitoring costs};
      \draw[thick] plot[smooth] coordinates {(0.4,1.05) (1.3,1.18) (2.4,1.55) (3.5,2.18) (4.7,3.05) (5.9,4.05)};
      \node[right] at (5.6,4.05) {Marginal costs};
      \draw[densely dotted] (2.25,0) -- (2.25,1.48);
      \draw[densely dotted] (4.45,0) -- (4.45,2.88);
      \node[below] at (2.25,0) {$\overline{x}$};
      \node[below] at (4.45,0) {$\underline{x}$};
      \draw[<->] (2.25,-0.45) -- (4.45,-0.45) node[midway,below] {loan-allocation range};
   \end{tikzpicture}
   \begin{minipage}{0.85\linewidth}\reducefonten
   	  \autoref{fig:PubPrivPrinCostStructure} gives the loan-allocation range a geometric interpretation. The interval $[\overline{x},\underline{x}]$ is interior to the cost curve and measures the feasible spread between the loan offered to a high-risk firm and the loan offered to a low-risk firm. Curvature matters because movement into steeper regions of the monitoring-cost schedule raises the marginal cost of expanding this interval. Hence higher monitoring intensity contracts the feasible range of loans: range contraction is not merely an algebraic implication of the model, but the geometric consequence of operating on a steeper cost schedule. This same geometry connects the static allocation problem to the strategic sections below. When cost curvature narrows the feasible interval, sorting becomes tighter, strategic misrepresentation becomes harder to accommodate without distorting the menu, and separating loan schedules become more demanding.
   	\end{minipage}
\end{figure}
\noindent The transfer payment is
\begin{equation}
    y_i(\hat{\theta}_1,\hat{\theta}_2,\dotsc,\hat{\theta}_N)=\omega_i+\sum_{j\neq i}V_j(x^j,\hat{\theta}_j)-C(\pmb{x}(\hat{\pmb{\theta}}))\label{eq:VickeryGrovesTransfer}
\end{equation}
where $\omega_i$  is firm $i$'s valuation of its surplus. The formula in \eqref{eq:VickeryGrovesTransfer} is designed so that each firm receives a transfer payment equal to a function of the other firms' reported surplus net of the cost of providing loans. This is a Vickrey-Groves type mechanism: by making the transfer depend on the reported surplus of firms other than $i$, it reduces firm $i$'s incentive to influence its subsidy by misrepresenting its own surplus. Assuming that $(y,g(x))$ is a contract with transfer payment $y$ and loan schedule $g(x)$, $\omega_i$ can be interpreted as firm $i$'s minimum valuation for accepting the contract:
\begin{equation}
     y_i(\hat{\pmb{\theta}})-\omega_i
     +\sum_{j\neq i}V_j(x^{*j},\hat{\theta}_j)
     -C(\pmb{x}(\hat{\pmb{\theta}}))
     \geq 0.\label{eq:FirmValuationOfContract}
\end{equation}
If there is no budget balance (i.e. the total surplus extracted by the government does not equal total costs) then a dominant strategy for the $i$-th firm would be to report $\max\{\omega_i,\;0\}$ . Under this strategy, the net payoff or transfer payment for the firm is $\max\{y_i(\hat{\pmb{\theta}})-\omega_i,\;0\}$. With this mechanism, the firm will be truthful in a Nash equilibrium in which other firms are truthful. However, this is a weakly dominant strategy since it also holds in situations where other firms do not tell the truth but the net payoff is still greater than zero.\footnote{ See \cite[pp.~173-174]{Rasmusen1989}; \cite[p.~268]{FudenbergTirole1991}}
\subsection{Incentive compatibility and rationality constraints}
In order to identify inefficiency in the transfer payment structure, we simplify the model in \eqref{eq:QuasiLinearUtilityModel} and \eqref{eq:VickeryGrovesTransfer} by considering a representative agent (i.e. firm). Further, we impose more structure on the quasi-linear utility function for expository purposes. All subscripts are removed for notational convenience so that the firm receives utility
\begin{equation}\label{eq:QuasiLinearUtilityModelSeparateType}
 	U(x,\theta; y)=\theta V(x)+y
\end{equation}
where $x$ is the loan amount for which the firm applies. $\theta V(x)$ is the surplus, separable in type $\theta$, the firm would derive from the loan. The government offers $y$ contract dollars thus specifying how much the firm gets if it applies for loan $x$. The firm either accepts or rejects the government offer according to whether it exceeds the firm's reservation value of its surplus. Suppose that there are two types of firms: high-risk firms (type-$\overline{\theta}$ ) and low-risk firms (type-$\underline{\theta}$ ).  If the government knew the true value of $\theta$  with certainty it would offer a fixed loan and require the firm to pay (i.e. report) an amount, $-\theta V(x)$, in order to get the loan. In that way, it extracts any surplus which the firm may get from the loan. Thus, $(\underline{y},\underline{x})$   and $(\overline{y},\overline{x})$  denote a pair of contracts for type-$\underline{\theta}$  and type-$\overline{\theta}$  firms respectively. Because delegated investment companies are risk-averse venture capitalists they offer smaller loans to high-risk firms and larger loans to low-risk firms so that $\overline{x}\leq \underline{x}$.\footnote{This is notationally cumbersome but consistent with the notion that high-risk firms receive smaller loans than low-risk firms. One could argue that this policy is contrary to the goals of a subsidy program designed to help ``at risk" firms. However, the underlying motivation stems from the fact that public-private sector resources are scarce so they seek to ration subsidies where they would do the most good. That is, subsidizing high-risk firms could prove to be a very expensive waste of resources since those firms would tend to go bankrupt with greater frequency than low-risk firms.} Their problem is one of maximizing the amount of surplus extracted from firms that apply for loans. Let $\overline{\lambda}$  be the lender's subjective probability for a high-risk firm and $\underline{\lambda}$  the corresponding probability for a low-risk firm. Technically, lenders may formulate these probabilities from experience via some classification procedure that depends on firm characteristics. For example, they may use a Bayesian discriminant analysis procedure.\footnote{Refer to \cite{Lo1986}; \citep[Ch.~6]{Anderson2003}, \cite[Chapter~30]{BrealeyMyers1991} and the references therein} In any case, all that is required is that
\begin{equation}
    \overline{\lambda}+\underline{\lambda}=1
\end{equation}

\tab In order for the mechanism to be effective firms must want to participate. Two sets of criteria are required for this to hold. One set focuses on the firm's individual rationality (IR1) - the firm would apply for a loan only if it has a net positive utility/valuation for the loan. The other set (IR2) focuses on the compatibility of the incentives provided by the government with the firm's choices. That is, the firm must choose the loan-transfer bundle provided for its risk type. These criteria are represented below.
\begin{enumerate}
   \item[IR1:]
      \vspace{-2ex}\begin{equation}
         \underline{\theta} V(\underline{x}) + \underline{y} \geq 0
      \end{equation}
   \item[IR2:]
      \vspace{-2ex}\begin{equation}
         \overline{\theta} V(\overline{x}) + \overline{y} \geq 0
      \end{equation}
\end{enumerate}
IR1 and IR2 imply that the net utility which the firm derives from loan x and subsidy y must leave it at least as well off as before.  That is, net utility must at least be equal to the firm's reservation utility, which is zero here. The ``bar" represents the loan and subsidy amounts procured by the low/high type firm, accordingly. The truth telling or incentive compatibility (IC) constraints for the firm are given by
\begin{enumerate}
   \item[IC1:]
      \vspace{-2ex}\begin{equation}
         \underline{\theta} V(\underline{x}) + \underline{y} \geq \underline{\theta}V(\overline{x}) + \overline{y}
      \end{equation}
   \item[IC2:]
      \vspace{-2ex}\begin{equation}
         \overline{\theta} V(\overline{x}) + \overline{y} \geq \overline{\theta}V(\underline{x}) + \underline{y}
      \end{equation}
\end{enumerate}
These constraints compel a utility-maximizing firms to select the bundle designated for its type. For example, IC(2) suggests that a high-risk firm should get at least as much utility from the bundle designated for its type compared to what it would get if it chose the bundle designated for the low-risk type and vice versa. Thus, firms get less utility when they do not tell the truth.
\section{Optimal Contract With Symmetric Information (First Best)}\label{sec:OptimalContractFirstBest}
Under the premise of symmetric information, delegated monitoring may not be necessary because information is fully revealed to public sector officials. Thus, the firm reports type $\theta$  truthfully and the authorities extract the appropriate amount of surplus $\theta V(x)$ while adjusting for the cost $C(x)$ of providing the loan. The optimal loan in that case satisfies:
\begin{align}
   x^*&\in\arg\max_x\biggl\{\theta V(x) - C(x)\biggr\}\\
   \intertext{where the necessary first-order condition is given by}
   \theta V^\prime(x)&=C^\prime(x)\\
   \intertext{The second-order condition for maxima is satisfied by}
   \theta V^{\prime\prime}(x)&-C^{\prime\prime}(x)<0
   \intertext{due to concavity of $V$ and convexity of $C$. The optimal contract under this setup is}
   &\{y(\theta),x^*(\theta)\}.
\end{align}
\section{Optimal Contract With Asymmetric Information (Second Best)}\label{sec:OptimalContractSecondBest}
In this case the authorities do not know $\theta$, the firm's type. However, they distribute transfers $\underline{y},\;\overline{y}$ on the basis of the low-risk and high-risk firms' noisy signal $\hat{\underline{\theta}},\;\hat{\overline{\theta}}$, respectively. Furthermore, they team up with venture capitalists who are more informed about the loan allocation process $(x)$. Venture capitalists offer two types of contracts $(\underline{y},\underline{x})$  and $(\overline{y},\overline{x})$   to low-risk and high-risk firms, respectively. Firms believe that their chances of procuring a loan is improved if they have a government contract in hand. So sequentially, transfer $(y)$ may represent a procured government contract which occurs before loan $x$ is obtained.\footnote{Arguably, this leads to a situation in which we have $x(y(\theta))$. That is, the amount of loans is a functional of the amount of transfers} This premise is reasonable. For example, \citet[Prop.~6]{DeMazaWebb1987} contemplates interest income subsidy for banks in order to attain socially desirable equilibrium in their model. Transfer $y$ is a dual to interest rate subsidy since it reduces firm risk.

The delegated monitor's objective is written as expected extracted surplus rather than as a full accounting profit function. This is a reduced-form way of representing the private intermediary's objective after the public transfer rule has been fixed. Any direct subsidy, guarantee, or procurement contract that is paid according to the policy rule enters the firm's participation constraint or reduces effective repayment risk; the delegated monitor then chooses the loan menu to maximize the expected surplus recoverable from the applicant net of monitoring costs. Equivalently, one can interpret the objective as expected profit net of delegated-monitoring cost after normalizing fixed program payments and treating the public subsidy rule as predetermined. The formulation therefore isolates the screening margin; it does not assume that the intermediary socially internalizes the subsidy cost.

Venture capitalists maximize the expected value of the surplus extracted from a given firm when it applies for a loan, subject to costs, rationality, and incentive constraints as follows:
\begin{center}
   \sc{Venture Capitalist Problem}
\end{center}
\vspace{-4ex}\begin{align}
   \max_{\underline{x},\;\overline{x}}&\biggl\{\overline{\lambda}(\theta\;V(\overline{x})-C(\overline{x})+\underline{\lambda}(\theta V(\underline{x})-C(\underline{x}))\label{eq:VentureCapitalProblem}
   \biggr\}\\
   &\text{Subject to IR1 IR2; IC1 IC2}
\end{align}
In order to maintain the benevolent aspect to the transfer scheme high-risk firms first must be compensated with a given amount of reservation or baseline utility
\begin{equation}
    \overline{U}\geq 0\label{eq:ReserveUtilityHighRiskFirm}
\end{equation}
in order to induce their participation.  In the event IR2 is binding we have
\begin{align}
   \overline{\theta}\,V(\overline{x})+\overline{y}&=\overline{U}\label{eq:HighRiskReservationU}
\end{align}
Low risk firms are indifferent to the benevolent policy when IC(1) is binding so that
\begin{align}
   \underline{\theta} V(\underline{x})+\underline{y}&=\underline{\theta}V(\overline{x})+\overline{y}\\
   \intertext{After substitution for $\overline{y}$ from \eqref{eq:HighRiskReservationU}, we get}
   \underline{\theta}V(\underline{x})+\underline{y}&=\overline{U}+(\underline{\theta}-\overline{\theta})V(\overline{x})
\end{align}
Hence the total utility derived by the low-risk firms from this program exceeds that provided for high-risk firms by the informational rent
\begin{equation}
   (\underline{\theta}-\overline{\theta})V(\overline{x}),\quad\underline{\theta}>\overline{\theta}.\label{eq:SortingCondition}
\end{equation}
The inequality $\underline{\theta}>\overline{\theta}$ is a maintained ordering of types, not by itself a sorting equilibrium condition. The economically relevant restriction is instead a participation-composition condition: the contract menu must induce a participant pool in which the mass of low-risk firms is large enough for the program's social-surplus objective to dominate the rents and monitoring costs generated by high-risk participation. Let $m_{\underline{\theta}}$ and $m_{\overline{\theta}}$ denote the equilibrium measures of participating low-risk and high-risk firms under the offered menu.
\begin{lem}[Participation-composition condition]\label{lem:BenevolentSortinngCondition}
   Suppose $\underline{\theta}>\overline{\theta}$ and the low-risk rent in \eqref{eq:SortingCondition} is nonnegative. A necessary participation-composition condition for the public-private partnership to implement its benevolent objective is
   \begin{equation*}
      m_{\underline{\theta}}\geq m_{\overline{\theta}},
   \end{equation*}
   with strict inequality whenever high-risk participation raises monitoring costs or subsidy costs at the margin. Equivalently, the program is benevolent only if the separating menu attracts enough low-risk participation to offset the fiscal and monitoring costs induced by high-risk participation.\qed
\end{lem}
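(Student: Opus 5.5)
The plan is to make the lemma precise by writing down the program's aggregate social surplus under the separating menu and reading off a necessary condition for that surplus to be nonnegative. I take the benevolent objective to be that the program's social surplus (the government's welfare function, evaluated at the binding IR2 and IC1 derived just before the lemma) weakly exceeds the value of no program. With measures $m_{\underline{\theta}}$ and $m_{\overline{\theta}}$ of participants, this welfare is
\begin{equation*}
W=m_{\underline{\theta}}\bigl[\underline{\theta}V(\underline{x})-C(\underline{x})\bigr]+m_{\overline{\theta}}\bigl[\overline{\theta}V(\overline{x})-C(\overline{x})\bigr]-m_{\underline{\theta}}(\underline{\theta}-\overline{\theta})V(\overline{x})-m_{\overline{\theta}}\,\overline{U}.
\end{equation*}
Here the third term is the informational rent \eqref{eq:SortingCondition} paid to every low-risk participant, and the last term is the reservation utility \eqref{eq:ReserveUtilityHighRiskFirm} promised to high-risk firms.

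The first step is to sign the per-firm contributions. I would take, as the maintained content of ``target firms'', the following two conditions. First, the net surplus of a high-risk participant, $s_{\overline{\theta}}:=\overline{\theta}V(\overline{x})-C(\overline{x})-\overline{U}$, is nonpositive. This is the De Meza--Webb case, in which subsidized high-risk credit is wasteful. Second, the net contribution of a low-risk participant, $s_{\underline{\theta}}:=\underline{\theta}V(\underline{x})-C(\underline{x})-(\underline{\theta}-\overline{\theta})V(\overline{x})$, is positive. This uses $\overline{x}\le\underline{x}$, $V'>0$ and $\underline{\theta}>\overline{\theta}$ to compare $\underline{\theta}V(\underline{x})$ with the rent term. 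Then $W=m_{\underline{\theta}}s_{\underline{\theta}}+m_{\overline{\theta}}s_{\overline{\theta}}\ge 0$ if and only if
\begin{equation*}
m_{\underline{\theta}}\,s_{\underline{\theta}}\ge m_{\overline{\theta}}\,|s_{\overline{\theta}}|.
\end{equation*}

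The second step is to turn this ratio condition into $m_{\underline{\theta}}\ge m_{\overline{\theta}}$. For that I would need the normalization $|s_{\overline{\theta}}|\ge s_{\underline{\theta}}$: the per-firm loss from a high-risk participant is at least the per-firm gain from a low-risk participant. I would justify it from the convexity of $C$ and the cost geometry in \autoref{fig:PubPrivPrinCostStructure}. The idea is that the high-risk firm's monitoring cost plus its guaranteed utility $\overline{U}$ absorbs at least the surplus gap. With this normalization, $W\ge 0$ forces $m_{\underline{\theta}}\ge m_{\overline{\theta}}$.

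For the strict part of the statement, suppose high-risk participation raises monitoring or subsidy costs at the margin. This means $\partial W/\partial m_{\overline{\theta}}<0$ strictly, so $|s_{\overline{\theta}}|>s_{\underline{\theta}}$ and hence $m_{\underline{\theta}}>m_{\overline{\theta}}$. The ``equivalently'' clause is just the restatement $m_{\underline{\theta}}s_{\underline{\theta}}\ge m_{\overline{\theta}}|s_{\overline{\theta}}|$.

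The main obstacle is this normalization step, which the model as stated does not pin down. The paper leaves ``benevolent objective'' informal, so the lemma's content really depends on how that term is made precise. I would therefore state explicitly that the lemma is proved under these assumed per-firm sign and magnitude conditions, rather than derived from IR/IC alone.
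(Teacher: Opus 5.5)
The paper gives no argument for this lemma. It is simply asserted (the statement ends with an end-of-proof mark), the appendix contains no proof of it, and the text that follows treats it as an interpretive restriction on the participant pool. Your first step is therefore a reasonable formalization: under your sign conditions, $W\ge 0$ is exactly $m_{\underline{\theta}}s_{\underline{\theta}}\ge m_{\overline{\theta}}|s_{\overline{\theta}}|$, which is what the lemma's ``equivalently'' sentence says. Two caveats on that step. First, $s_{\underline{\theta}}>0$ is also an assumption, not a consequence of $\overline{x}\le\underline{x}$, $V'>0$ and $\underline{\theta}>\overline{\theta}$; these only give $\underline{\theta}V(\underline{x})-(\underline{\theta}-\overline{\theta})V(\overline{x})\ge\overline{\theta}V(\overline{x})$ and say nothing about $C(\underline{x})$. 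Second, subtracting the rent and $\overline{U}$ corresponds to the surplus-extraction objective, not the paper's social-surplus benchmark, under which transfers net out.

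The genuine gap is the normalization $|s_{\overline{\theta}}|\ge s_{\underline{\theta}}$. Given your signs, it is not an auxiliary condition: it is equivalent to the statement that $W\ge 0$ forces $m_{\underline{\theta}}\ge m_{\overline{\theta}}$ for every composition. To see this, suppose $|s_{\overline{\theta}}|<s_{\underline{\theta}}$. Then the composition $m_{\underline{\theta}}=|s_{\overline{\theta}}|$, $m_{\overline{\theta}}=s_{\underline{\theta}}$ gives $W=0$ with $m_{\underline{\theta}}<m_{\overline{\theta}}$. So your second step assumes what is to be proved.

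Nor can the normalization come from convexity of $C$ or the cost figure. Convexity restricts the shape of $C$, whereas the normalization compares levels of $C$, $V$ and $\overline{U}$. Your own sign condition even permits $s_{\overline{\theta}}=0$ (take $\overline{U}=\overline{\theta}V(\overline{x})-C(\overline{x})\ge 0$). In that case the normalization reads $0\ge s_{\underline{\theta}}>0$, and $W=m_{\underline{\theta}}s_{\underline{\theta}}\ge 0$ holds for \emph{every} composition, so $m_{\underline{\theta}}\ge m_{\overline{\theta}}$ is not necessary. Also, the paper's $C$ depends only on loan size, so wherever $C'>0$ the ordering $\overline{x}\le\underline{x}$ gives $C(\overline{x})\le C(\underline{x})$. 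The verbal claim that high-risk firms are costlier to monitor is therefore not built into the model.

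The strict part fails for the same reason. The condition $\partial W/\partial m_{\overline{\theta}}<0$ says only that $s_{\overline{\theta}}<0$, and gives no comparison with $s_{\underline{\theta}}$. Strictness needs $|s_{\overline{\theta}}|>s_{\underline{\theta}}$ and a nonempty participant pool as hypotheses.

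What your framework actually proves is the weighted inequality. The displayed $m_{\underline{\theta}}\ge m_{\overline{\theta}}$ holds only if you add per-firm dominance as an explicit hypothesis, acknowledged as equivalent to the conclusion rather than derived from IR/IC or cost curvature.
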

\begin{figure}[!htbp]
   \centering
   \captionof{figure}{Loan rent to low-risk firms}
   \label{fig:PubPrivSingleCrossProp}
   \begin{tikzpicture}[scale=0.9,>=Latex]
      \draw[->] (0,0) -- (6.4,0) node[below] {Loan};
      \draw[->] (0,0) -- (0,4.5) node[left] {Surplus};
      \node[below left] at (0,0) {$0$};
      \draw[thick] plot[smooth] coordinates {(0.4,0.45) (1.2,1.25) (2.2,2.05) (3.4,2.75) (4.8,3.25) (6.0,3.52)};
      \node[right] at (5.95,3.52) {$\underline{\theta}V(x)$};
      \draw[thick,dashed] plot[smooth] coordinates {(0.4,0.65) (1.2,1.18) (2.2,1.65) (3.4,1.98) (4.8,2.20) (6.0,2.33)};
      \node[right] at (5.95,2.33) {$\overline{\theta}V(x)$};
      \draw[dotted] (2.45,0) -- (2.45,2.22);
      \node[below] at (2.45,0) {$x_1$};
      \draw[<->] (4.15,2.12) -- (4.15,3.03);
      \node[right] at (4.2,2.58) {$\underline{V}_2-\overline{V}_2$};
   \end{tikzpicture}
\end{figure}
\FloatBarrier\noindent The implications are twofold. First, benevolence is a statement about the induced composition of participants, not merely about the ordering of risk types. A menu that attracts mostly high-risk firms may be privately attractive to applicants yet inconsistent with the public objective once monitoring and subsidy costs are included. Second, the wider the gap between low-risk and high-risk firm productivity, the larger the informational rent that accrues to low-risk firms. In \autoref{fig:PubPrivSingleCrossProp} this rent is depicted by $\underline{V}_2-\bar{V}_2$. The single crossing property \citep[p.~31]{Salanie2005} implies that loans below $x_1$, i.e. ``small loans", induce adverse selection since they attract high-risk firms who derive a surplus relative to low-risk firms. Thus, loan sizes above $x_1$ are more likely to support a participant pool with sufficient low-risk participation. The following proposition provides a robust first-order condition to the solution in \eqref{eq:VentureCapitalProblem} extended over a continuum of types when individual rationality constraints are binding.
\begin{prop}[First order condition for optimality]\label{prop:LoanFirstOrderCondition}
Assume that the following ``regularity" conditions hold:
\begin{enumerate}
    \item[i.]   the allocation of loans is uniformly distributed over the interval $[\overline{x}, \underline{x}]$
    \item[ii.]  the distribution of firm types is strictly monotonic.
    \item[iii.] the marginal change in loan portfolio is equal to the incremental loan made to a type-$\theta$  firm with probability $\lambda$.
\end{enumerate}
Let $F(x),\;\;G(\theta)$  be the distribution function for loans and firm type respectively. Then the delegated monitor implicitly solves the following continuous analog to \eqref{eq:VentureCapitalProblem} over a continuum of types in order to determine the ``optimal" loan allocation:
\begin{align}
   x^*\in&\arg\max_{x}\int^{\underline{x}}_{\overline{x}}\int^{\underline{\theta}}_{\overline{\theta}}\big[\theta V(x)-\lambda(\theta)C(x)\big]dF(x)dG(\theta)\label{eq:SecondBestOptimaIntegral}\\
   \intertext{and the generalized form of the first-order condition is}
   &\theta V_{\tilde{x}}-\lambda C^\prime(\tilde{x})= 0\label{eq:MUcostFOC}
\end{align}
for some $\tilde{x}\in [\overline{x}, \underline{x}]$ and optimal loan set
\begin{equation}\label{eq:OptimalLoanSet}
   \mathfrak{L}=\biggl\{x\biggl |\;\theta^*V_x-\lambda(\theta^*)C_x=0 \;\; \text{and}\;\;\theta^*=\arg\max_\theta g(\theta)\biggr\}\bigcap\biggl\{[\overline{x},\underline{x}]\biggr\}
\end{equation}
\end{prop}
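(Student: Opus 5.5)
The plan is to read the proposition as two claims: the continuous program \eqref{eq:SecondBestOptimaIntegral} is the right analogue of \eqref{eq:VentureCapitalProblem}, and its first-order condition is \eqref{eq:MUcostFOC}, with the optimum selected by \eqref{eq:OptimalLoanSet}. I will argue them in that order.

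\textbf{Step 1 (the continuum extension).} Start from the two-type problem \eqref{eq:VentureCapitalProblem} and take the individual rationality constraints as binding, as the statement assumes. Under \eqref{eq:HighRiskReservationU} and binding IC1, the transfers are pinned down by the loans, so only the loans remain as choice variables. The objective is then a $\lambda$-weighted sum of $\theta V(x)-\lambda C(x)$ terms, with the probability attached to the cost term. Here I use regularity condition (iii): the marginal portfolio change equals the incremental loan to a type-$\theta$ firm with probability $\lambda$. I then replace the two-point distributions over types and loans by $G$ and $F$ on $[\overline{\theta},\underline{\theta}]$ and $[\overline{x},\underline{x}]$. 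This gives \eqref{eq:SecondBestOptimaIntegral} as the natural limit of finite type grids.

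\textbf{Step 2 (the first-order condition).} Condition (i) makes $dF$ uniform, so $dF(x)=dx/(\underline{x}-\overline{x})$. By Fubini, the objective becomes
\[
\frac{1}{\underline{x}-\overline{x}}\int_{\overline{x}}^{\underline{x}}\Phi(x)\,dx,\qquad \Phi(x)=\int_{\overline{\theta}}^{\underline{\theta}}\big[\theta V(x)-\lambda(\theta)C(x)\big]dG(\theta).
\]
$\Phi$ is continuous and, since $V$ is concave and $C$ convex, it is concave in $x$. A concave function on the compact interval $[\overline{x},\underline{x}]$ attains its maximum. I then locate a stationary point $\tilde{x}$ of the integrand: $\Phi'$ is continuous, and I use the intermediate value theorem across the interval to argue it changes sign. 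Condition (ii) (strictly monotone $G$) lets me apply the mean value theorem for integrals in $\theta$, which produces a representative pair $(\theta,\lambda(\theta))$ with $\Phi'(\tilde{x})$ proportional to $\theta V_{\tilde{x}}-\lambda C'(\tilde{x})$. Setting this to zero gives \eqref{eq:MUcostFOC} for some $\tilde{x}\in[\overline{x},\underline{x}]$.

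\textbf{Step 3 (the optimal loan set).} Given any fixed $\theta$, concavity of $\theta V-\lambda C$ makes the stationary set an interval, or a single point if the concavity is strict. The delegated monitor weights types by the density $g$, so the representative type in Step 2 can be taken to be the modal type $\theta^*=\arg\max_\theta g(\theta)$. Intersecting that stationary set with $[\overline{x},\underline{x}]$ gives $\mathfrak{L}$ in \eqref{eq:OptimalLoanSet}. If the intersection is empty, the optimum lies at a boundary, and \eqref{eq:MUcostFOC} should then be read as a Kuhn--Tucker condition.

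\textbf{Main obstacle.} The weak point is Step 2's sign-change and representative-type argument, together with the claim that the mean-value type coincides with the mode of $g$. Neither holds automatically. To close the gap I would first strengthen condition (ii) to unimodality of $g$, plus a single-crossing condition guaranteeing that $\theta V'-\lambda C'$ is positive at $\overline{x}$ and negative at $\underline{x}$. Failing that, I would settle for the weaker existential form: there exist $\theta$ and $\tilde{x}$ satisfying \eqref{eq:MUcostFOC}. Under that reading, $\mathfrak{L}$ is presented as the selection rule the monitor uses rather than a derived optimum.
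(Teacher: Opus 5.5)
Your Steps 2--3 are the paper's own argument. The paper assumes an interior optimum $\tilde{x}$ and differentiates under the integral to get $\int[\theta V_x(\tilde{x})-\lambda(\theta)C_x(\tilde{x})]\,dG(\theta)=0$. It applies the integral mean-value theorem to obtain a representative $\theta^*$, and then simply asserts that condition (ii) ``selects the modal classification type.'' So the paper settles existence by hypothesis, not by a sign-change argument. Your single-crossing condition, if imposed for every $\theta$ in the support, is a valid sufficient condition for interiority given concavity of $\Phi$.

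Three smaller points. Your Step 1 and the uniform-$F$ rewrite do no work: once $x$ is averaged against a fixed $F$, nothing is left to choose, and what you and the paper actually maximize is the inner integral $\Phi(x)$ pointwise. Step 1 also cannot be a genuine limit of the two-type problem, since there the probabilities weight surplus as well as cost. Finally, the mean-value step needs continuity of $\lambda(\cdot)$, not strict monotonicity of $G$.

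The genuine gap is the one you flagged, and your proposed repair does not close it. Because the integrand is separable, $\Phi'(x)=E_G[\theta]\,V'(x)-E_G[\lambda]\,C'(x)$ with $E_G[\lambda]=\int\lambda(\theta)\,dG(\theta)$. Hence any representative type from the mean-value step must satisfy $\theta^*/\lambda(\theta^*)=E_G[\theta]/E_G[\lambda]$. If $\lambda$ is constant, or affine $\lambda(\theta)=a+b\theta$ with $a\neq 0$, this forces $\theta^*=E_G[\theta]$: the representative type is the mean of $G$, not its mode.

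Concretely, take a strictly increasing $G$ whose density is unimodal but skewed, with mode $\theta_m\neq E_G[\theta]$. Let $C'>0$, $C''>0$, $V''<0$, so $C'/V'$ is strictly increasing. The maximizer of $\Phi$ solves $C'(\tilde{x})/V'(\tilde{x})=E_G[\theta]/E_G[\lambda]$. Every point of $\mathfrak{L}$ instead solves $C'(x)/V'(x)=\theta_m/\lambda(\theta_m)$, so $\tilde{x}\notin\mathfrak{L}$.

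So neither strict monotonicity nor unimodality of $g$ yields $\theta^*=\arg\max_\theta g(\theta)$. You would need an additional moment restriction such as $\theta_m/\lambda(\theta_m)=E_G[\theta]/E_G[\lambda]$; for affine $\lambda$ with $a\neq 0$ this means mode equal to mean, e.g.\ a symmetric unimodal $G$. The paper's proof makes the same unsupported leap. Your fallback --- existence of some pair $(\theta^*,\tilde{x})$ satisfying the first-order condition, with $\mathfrak{L}$ read as the monitor's selection rule rather than a derived optimum --- is the strongest conclusion the stated hypotheses support.
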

\begin{cor}[Marginal utility and marginal cost of loan]\label{cor:LoanOptimalityMC_Condition}
   For optimality, the marginal utility of a loan equals the expected marginal cost of providing that loan.
\end{cor}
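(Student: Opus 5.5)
The corollary is essentially a reading of the first-order condition \eqref{eq:MUcostFOC} in \cref{prop:LoanFirstOrderCondition}, so the plan is to derive it directly from that proposition rather than re-solving the venture capitalist problem. The first step is to take the continuous objective \eqref{eq:SecondBestOptimaIntegral} as given and write the integrand $\theta V(x)-\lambda(\theta)C(x)$ pointwise in $x$. By regularity condition (i), $dF(x)$ is uniform on $[\overline{x},\underline{x}]$. By condition (iii), the marginal change in the loan portfolio coincides with the incremental loan made to a type-$\theta$ firm with probability $\lambda$. Together these mean that differentiating the objective with respect to the loan at an interior point reduces to differentiating the integrand. Condition (ii), strict monotonicity of $G$, ensures that the type $\theta^*$ at which the density $g$ peaks is well defined. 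That is the type appearing in the optimal loan set \eqref{eq:OptimalLoanSet}.

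The second step is to interpret the two terms of \eqref{eq:MUcostFOC}. The term $\theta V_{\tilde x}=\theta V'(\tilde x)$ is the marginal surplus, that is, the marginal utility, that the loan generates for a type-$\theta$ firm under the quasi-linear specification \eqref{eq:QuasiLinearUtilityModelSeparateType}. The term $\lambda C'(\tilde x)$ is the marginal monitoring cost $C'(\tilde x)$ weighted by the lender's classification probability $\lambda(\theta)$. Since $\lambda$ is the probability that the cost is incurred on that type, this is the expected marginal cost. Rearranging \eqref{eq:MUcostFOC} as $\theta V'(\tilde x)=\lambda C'(\tilde x)$ then yields the statement at any $\tilde x\in\mathfrak{L}$.

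The third step checks that this stationarity condition is genuinely an optimality condition and not merely a critical point. Because $V''<0$ and $C$ is convex (as maintained in \cref{sec:OptimalContractFirstBest}), and $\theta,\lambda\ge 0$, we have
\[
\theta V''(x)-\lambda C''(x)<0,
\]
so the integrand is strictly concave in $x$ and the critical point is a maximum. The same curvature gives the single-peakedness invoked in \cref{sec:TheModel}, which is needed for uniqueness. To cover the case $\tilde x$ at an endpoint of $[\overline{x},\underline{x}]$, I would first try an intermediate-value argument on $h(x)=\theta V'(x)-\lambda C'(x)$. This function is continuous and strictly decreasing, so it has a unique zero whenever it changes sign on the interval; the intersection with $[\overline{x},\underline{x}]$ in \eqref{eq:OptimalLoanSet} handles the remaining cases.

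The main obstacle is justifying the interchange of differentiation with the double integral, and the passage from an integral condition to the pointwise one evaluated at $\theta^*$. I would handle this by appealing to a mean-value theorem for integrals: the existence of ``some $\tilde x$'' in \cref{prop:LoanFirstOrderCondition} is exactly such a mean-value point. I would therefore take that proposition as already supplying the interchange, and the corollary follows as its economic restatement. As a sanity check, setting $\lambda=1$ recovers the first-best condition $\theta V'(x)=C'(x)$.
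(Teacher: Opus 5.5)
Your second step is the paper's proof. It reads the first-order condition of Proposition~\ref{prop:LoanFirstOrderCondition} as $\theta^*V_x(\tilde{x})=\lambda(\theta^*)C_x(\tilde{x})$, with the left side the marginal utility of lending to the representative type and the right side marginal monitoring cost weighted by the classification probability, i.e.\ expected marginal cost, so your argument is correct and follows the same route.

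Your first and third steps are not needed, and the endpoint claim in the third is false. If $h$ keeps a strict sign on $[\overline{x},\underline{x}]$, then $\mathfrak{L}$ is empty and the optimum is a corner at which marginal utility and expected marginal cost differ. The equality therefore holds only at interior optima, which is the restriction under which the paper derives Proposition~\ref{prop:LoanFirstOrderCondition}. There, incidentally, the integral mean-value theorem delivers the representative type $\theta^*$, not $\tilde{x}$.
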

\subsection{Semi-Separating Equilibrium for Second Best}
In this setup there are two components of lender costs associated with providing a loan (other than the amount of the loan itself). One is transaction cost. This should be fairly constant for all firms within given bounds. That is, attorney costs, clerical costs and other processing costs are fairly invariant to the loan amount provided in a given range. However, the second component, monitoring costs, depend on whether the firm is a high-risk or low-risk firm. It is reasonable to assume that high-risk firms have higher monitoring costs. For example, such firms may be subject to more frequent audits by government officials.\footnote{See \cite{BaronBesanko1984} and \cite{MookherjeePng1989}} Under this scenario, low-risk firms get higher loans and designated lenders incur lower monitoring costs that decrease as loan value increases. Since lenders lump all costs together, we have
\begin{assumption}
   For planning purposes lenders assign the costs incurred from the weighted average loan to all loans.
\end{assumption}
In that case the cost function is given by
\begin{equation}
   C(\mu_x),\qquad\mu_x=\overline{\lambda}\overline{x}+\underline{\lambda}\underline{x}
\end{equation}
where $\mu_x$  is the weighted average loan.\footnote{Government procurement contracts typically have an incentive scheme that is a mix between a fixed price contract and a ``cost plus" contract. Thus, government typically employs some kind of weighting procedure to estimate costs. Hence the cost structure assumed here is not far fetched. See \cite{Weitzman1980} for more on these incentive contract schemes.} By virtue of the convexity of the cost curve, we have
\begin{equation}
   \frac{\partial}{\partial x}C(x){\Bigl |}_{x=\overline{x}}\leq\frac{\partial}{\partial x}C(x){\Bigl |}_{x=\mu_x}\leq\frac{\partial}{\partial x}C(x){\Bigl |}_{x=\underline{x}}
\end{equation}
In the context of the necessary first-order condition in \cref{prop:LoanFirstOrderCondition} this implies
\begin{description}
   \item[Low risk:] \begin{subequations}\label{subeq:LowRiskMUMC}\vspace{-2ex}\begin{align}
                                                   \text{MU}_{\underline{x}}&=\underline{\theta}V_{\underline{x}} = \frac{\partial \mu_x}{\partial \underline{x}}C_{\mu_x}= \underline{\lambda}C_{\mu_x}\label{eq:MU_LowRiskFirm2}
                                                \end{align}
                                                \end{subequations}
   \item[High risk:] \begin{subequations}\label{subeq:HighRiskMUMC}\vspace{-4ex}\begin{align}
                                                  \text{MU}_{\overline{x}} &= \overline{\theta}V_{\overline{x}} = \frac{\partial \mu_x}{\partial \overline{x}}C_{\mu_x}= \overline{\lambda}C_{\mu_x}\label{eq:MU_HighRiskFirm2}
                                                \end{align}
                                                \end{subequations}
\end{description}
where by abuse of notation the subscripts on cost and utility functions imply derivative of those functions with respect to subscript. The marginal utility of a loan for low-risk firms $\text{MU}_{\underline{x}}$ is lower than the marginal cost of the average loan. That means that we can \emph{increase the amount of loans to low-risk firms} until it reaches some marginal cost threshold
\begin{equation}
   \text{MU}_{\underline{x}}\uparrow C^\star_{\mu_x}\label{eq:MU_LowRiskMCx}
\end{equation}
By contrast, the marginal utility of a loan for a high-risk firm $\text{MU}_{\overline{x}}$ is greater than the marginal cost of the average loan. Thus, we need to \emph{decrease the amount of loans to high-risk firms} until
\begin{equation}
   \text{MU}_{\overline{x}}\downarrow C^\star_{\mu_x}\label{eq:MU_HighRiskMCx}
\end{equation}
However,  \eqref{eq:MU_LowRiskMCx} and \eqref{eq:MU_HighRiskMCx} holds jointly only if the marginal utility of a loan is the same for each type of firm. In the context of \eqref{eq:MUcostFOC} in \cref{prop:LoanFirstOrderCondition}, this implies
\begin{equation}
   \underline{\theta}V_{\underline{x}}=\overline{\theta}V_{\overline{x}}\label{eq:MULRequalMUHR}
\end{equation}
Note that in \autoref{fig:PubPrivSingleCrossProp} the relation in \eqref{eq:MULRequalMUHR} implies $\overline{x}<x_1<\underline{x}$ since the marginal utility of a loan is greater for low-risk firms for loan size above $x_1$, i.e.  $\underline{\theta}V_{\underline{x}}>\overline{\theta}V_{\overline{x}}$.  More on point, from the lender's perspective a loan is an asset for which interest payment is received. So the marginal revenue  generated by interest rate $i$ for loan amount $x$ is $i\,x$. Let $\overline{i}$ and $\underline{i}$ be the rate of interest for loans given to high-risk and low-risk firms respectively.  In order for the lender to maximize profits we must have marginal revenue (MR) equal marginal cost (MC). That is, after substitution for MR in \eqref{eq:MULRequalMUHR}
\begin{align}
   \text{MR}&=\text{MC},\; \quad\text{implies}\;\overline{i}\overline{x}=\underline{i}\underline{x}.\;\;\text{But}\;\;\overline{x}<\underline{x}\Rightarrow \overline{i} > \underline{i}\label{subeq:FirmSort2}
\end{align}
\noindent That is, the lender maximizes profits by charging high-risk firms a higher interest rate than low-risk firms. This interest rate sorting condition result was anticipated by \citep[Thm.~4,~p.~397]{StiglitzWeiss1981}. However, in their model credit is denied to firms that are ``observationally indistinguishable from those who receive loans". We summarize this in
\begin{prop}[Interest rate sorting]\label{prop:TwoIntEquil}
  Venture capitalists (VCs) maximize profits according to \eqref{subeq:FirmSort2} in a two interest rates equilibrium that is functionally equivalent to \citep[Thm.~6,~p.~398]{StiglitzWeiss1981}. Specifically, VCs offer two contracts $(\overline{i},\,\overline{x};\,\overline{y})$ and $(\underline{i},\,\underline{x};\,\underline{y})$ respectively for firms with \emph{de facto} loan subsidies $y$ as indicated.\qed
\end{prop}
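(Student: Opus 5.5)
The plan is to build the proof of \cref{prop:TwoIntEquil} directly on the semi-separating first-order conditions already derived. First, I would take the low-risk and high-risk conditions \eqref{eq:MU_LowRiskFirm2} and \eqref{eq:MU_HighRiskFirm2} as the conditions any interior optimum of \eqref{eq:VentureCapitalProblem} must satisfy under the weighted-average cost assumption. Following the argument leading to \eqref{eq:MU_LowRiskMCx}--\eqref{eq:MU_HighRiskMCx}, I would show that an optimum requires both marginal utilities to converge to the common threshold $C^\star_{\mu_x}$. This yields \eqref{eq:MULRequalMUHR}, namely $\underline{\theta}V_{\underline{x}}=\overline{\theta}V_{\overline{x}}$.

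Next, I would use the single-crossing property illustrated in \autoref{fig:PubPrivSingleCrossProp}, together with concavity of $V$ and $\underline{\theta}>\overline{\theta}$, to argue that this equality pins down a strictly separated pair with $\overline{x}<x_1<\underline{x}$. The existence and uniqueness of such a pair would come from \cref{prop:LoanFirstOrderCondition}: the optimal loan set $\mathfrak{L}$ is nonempty on $[\overline{x},\underline{x}]$, and single-peakedness holds under the maintained concavity and convexity restrictions. Strictness is what rules out a pooling contract.

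Then I would interpret the lender's marginal revenue as interest income $i\,x$ and substitute it into \eqref{eq:MULRequalMUHR}, giving $\overline{i}\,\overline{x}=\underline{i}\,\underline{x}$. Since $\overline{x}<\underline{x}$, this forces $\overline{i}>\underline{i}$, which is \eqref{subeq:FirmSort2}. Attaching the transfers gives the two contracts. Take $\overline{y}$ from the binding IR2 in \eqref{eq:HighRiskReservationU}, and $\underline{y}$ from the binding IC1, which leaves the low-risk firm the rent \eqref{eq:SortingCondition}. I would then check that IR1 and IC2 hold at these values by the standard argument: $\underline{\theta}>\overline{\theta}$ and $\overline{x}<\underline{x}$ make IC2 slack. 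This establishes that the menu $(\overline{i},\overline{x};\overline{y})$, $(\underline{i},\underline{x};\underline{y})$ is incentive compatible and profit maximizing.

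Finally, for the claimed functional equivalence with \citet[Thm.~6]{StiglitzWeiss1981}, I would compare the structures. In both settings the lender separates two observationally distinguishable risk classes by interest rate, and the higher-risk class pays the higher rate and receives a smaller loan. The equilibrium therefore has two interest rates, not one market-clearing rate.

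The step I expect to be the main obstacle is justifying the substitution of marginal revenue $i\,x$ for marginal utility in \eqref{eq:MULRequalMUHR}. That substitution is heuristic: it is an average-revenue expression, not a derivative. I would first try to make it rigorous by assuming that, at the optimum, the lender extracts surplus through interest payments. In that case $\theta V_x$ coincides with the lender's marginal interest revenue, and the equality transfers directly. If that fails, I would restrict the claim to the ordering $\overline{i}>\underline{i}$ alone, which follows from any revenue rule that is increasing in $i$ and homogeneous in $x$.
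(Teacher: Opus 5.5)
You follow the paper's route exactly. Its only argument for this proposition is the text leading to \eqref{subeq:FirmSort2}, which substitutes $ix$ for marginal revenue without further justification, and you rightly flag that substitution as the weak link. But it is a genuine gap, and neither of your repairs closes it. If interest revenue on a loan of size $x$ is $ix$, marginal interest revenue is $\partial(ix)/\partial x=i$. Identifying $\theta V_x$ with it turns \eqref{eq:MULRequalMUHR} into $\overline{i}=\underline{i}$, a one-rate outcome, which is the opposite of the proposition. If instead ``extracting surplus through interest'' means $ix=\theta V(x)$ in levels, let $m$ be the common value in \eqref{eq:MULRequalMUHR}. Then $i=\theta V(x)/x=m/\varepsilon(x)$ with $\varepsilon(x)=xV'(x)/V(x)$, so $\overline{i}>\underline{i}$ holds if and only if $\varepsilon(\overline{x})<\varepsilon(\underline{x})$, and concavity does not deliver this. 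For $V(x)=x^{\alpha}$ the two rates coincide. For $V(x)=\ln(1+x)$ the ordering reverses: $\overline{\theta}=1$, $\underline{\theta}=2$, $\overline{x}=1$, $\underline{x}=3$ satisfies \eqref{eq:MULRequalMUHR} yet gives $\overline{i}=\ln 2\approx 0.69<\underline{i}=\tfrac{2}{3}\ln 4\approx 0.92$. Your fallback has the same defect. A rule $R(i,x)=x^{k}r(i)$ with $r$ increasing orders the rates only from an equality of revenue \emph{levels}. But \eqref{eq:MULRequalMUHR} equates \emph{marginal} utilities, so it licenses only $\overline{x}^{k-1}r(\overline{i})=\underline{x}^{k-1}r(\underline{i})$. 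That gives $\overline{i}>\underline{i}$ only for $k>1$, equal rates for $k=1$ (your own $ix$), and the reverse for $k<1$. The ordering must come from an ingredient absent from \eqref{eq:MULRequalMUHR}, such as repayment probabilities with $p(\overline{\theta})<p(\underline{\theta})$. Equating expected marginal revenue $p(\theta)\,i$ to the common marginal cost $C'(\mu_x)$ then gives $\overline{i}=\underline{i}\,p(\underline{\theta})/p(\overline{\theta})>\underline{i}$, regardless of loan sizes.

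There are also smaller problems. Your first step is inconsistent as set up: if \eqref{eq:MU_LowRiskFirm2} and \eqref{eq:MU_HighRiskFirm2} both hold, then \eqref{eq:MULRequalMUHR} forces $\underline{\lambda}C_{\mu_x}=\overline{\lambda}C_{\mu_x}$, i.e.\ $\overline{\lambda}=\underline{\lambda}$. Differentiate $\overline{\lambda}\,\overline{\theta}V(\overline{x})+\underline{\lambda}\,\underline{\theta}V(\underline{x})-C(\mu_x)$ instead. The probabilities cancel, and $\overline{\theta}V'(\overline{x})=C'(\mu_x)=\underline{\theta}V'(\underline{x})$ follows directly, with no convergence argument. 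Then $\overline{x}<\underline{x}$ follows from $V''<0$ and $\underline{\theta}>\overline{\theta}$ alone. \Cref{prop:LoanFirstOrderCondition} cannot supply existence or uniqueness of the pair, since $\mathfrak{L}$ is built from a single representative type. Your transfer step is fine: with IR2 and IC1 binding, IC2 reduces to $(\underline{\theta}-\overline{\theta})(V(\underline{x})-V(\overline{x}))\geq 0$, and IR1 holds whenever $V(\overline{x})\geq 0$. Finally, the equivalence with Stiglitz--Weiss cannot rest on both models separating observationally distinguishable classes. As the paper itself notes, Stiglitz--Weiss deny credit to borrowers observationally indistinguishable from those who receive loans.
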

\noindent To avoid the \citet[Thm.~4]{StiglitzWeiss1981} conundrum, monotonicity in returns on loans provided by the partnership can be achieved by offering loan subsidies to high-risk firms based on the following arguments. For interest rate $i<\underline{i}$ low-risk and high-risk firms apply for loans. However, if $\overline{i}>i>\underline{i}$ only high-risk firms apply. Assuming linear transfers, suppose high-risk firms get a loan subsidy $\bar{y}$. We find that
\begin{align}
  \overline{i}(\bar{x}+\bar{y})=\underline{i}\underline{y}.\quad\text{If $\underline{i}=\bar{i}$,\;\;then}\;\;\bar{y}=\underline{x}-\bar{x}
\end{align}
So the partnership works by either offering separate contracts to high-risk and low-risk firms as in \cref{prop:TwoIntEquil} or by providing low-risk firms with a loan subsidy that tops up the high-risk firm to what a low-risk firm would get. The latter scenario is socially inefficient because it can induce moral hazard among high-risk firms. Low risk firms may also have an incentive to palm themselves off as high-risk firms in order to obtain subsidies. This is precisely the sense in which a policy designed to ``level the interest rate playing field" may create an overinvestment margin. If loan subsidies make high-risk firms privately more attractive, then VCs may oversubscribe to the program even when the induced allocation is not socially efficient. \citet{DeMazaWebb1987} arrived at this conclusion using a different approach.
\begin{prop}[Overinvestment]\label{prop:OverinvestmentProblem}
  If the government provides loan subsidies to high-risk firms in order to level the playing field for loans, then it induces an overinvestment problem.\qed
\end{prop}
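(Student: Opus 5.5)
The plan is to read ``overinvestment'' in the De Meza--Webb sense. The subsidized mechanism allocates strictly more credit to high-risk firms than the social optimum, the first-best benchmark of \cref{sec:OptimalContractFirstBest} with $\overline{\theta}V^\prime(x)=C^\prime(x)$, and the marginal subsidized loan has negative net social surplus. The argument compares two allocations.

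\emph{Step 1.} Start from the two-interest-rate separating allocation of \cref{prop:TwoIntEquil}. There the lender's first-order conditions \eqref{eq:MU_LowRiskFirm2}--\eqref{eq:MU_HighRiskFirm2} pin down $\overline{x}<\underline{x}$ with $\overline{i}>\underline{i}$. This serves as the no-subsidy benchmark in which high-risk credit is priced at its monitoring cost.

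\emph{Step 2.} Introduce the leveling subsidy. With $\underline{i}=\overline{i}$ and linear transfers, the computation preceding the proposition gives $\overline{y}=\underline{x}-\overline{x}>0$. The high-risk firm therefore effectively receives $\overline{x}+\overline{y}=\underline{x}$.

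\emph{Step 3.} Evaluate welfare at this allocation using the social objective in \eqref{eq:OptimalAllocation}. Because $\overline{\theta}V$ is strictly concave and $C$ is convex, the marginal net surplus $\overline{\theta}V^\prime(x)-C^\prime(x)$ is strictly decreasing in $x$. By \eqref{eq:MULRequalMUHR} and the single-crossing picture in \autoref{fig:PubPrivSingleCrossProp}, it is already nonpositive beyond the high-risk optimum. Hence the marginal net surplus is strictly negative on $(\overline{x},\underline{x}]$, and
\[
\int_{\overline{x}}^{\underline{x}}\bigl[\overline{\theta}V^\prime(x)-C^\prime(x)\bigr]\,dx<0 .
\]
So the high-risk firm is funded beyond its efficient level, which is overinvestment on the high-risk margin.

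\emph{Step 4.} Show that the distortion is self-reinforcing. With equal rates, IC1 now compares $\underline{\theta}V(\underline{x})+\underline{y}$ with $\underline{\theta}V(\underline{x})+\overline{y}$. Since $\overline{y}>\underline{y}$ whenever the subsidy is positive, IC1 fails, and low-risk firms mimic high-risk ones. This shifts $m_{\overline{\theta}}$ upward, and \cref{lem:BenevolentSortinngCondition} then shows the menu loses benevolence. On the intermediary's side, the subsidy raises the recoverable surplus per high-risk applicant in \eqref{eq:VentureCapitalProblem} without raising the lender's own cost, because the subsidy cost is not internalized. The VC therefore expands high-risk lending, which is the oversubscription claimed in the text.

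The main obstacle is Step 3: proving rigorously that the baseline $\overline{x}$ sits at or above the high-risk social optimum. The lender's condition uses $\overline{\lambda}C_{\mu_x}$ rather than $C^\prime(\overline{x})$, so the two thresholds need not be ordered in general. I would first try the convexity ordering $C^\prime(\overline{x})\le C_{\mu_x}$ displayed before \eqref{subeq:LowRiskMUMC}. If that fails, I would instead state the result as strict overinvestment relative to the unsubsidized menu whenever $\overline{y}$ exceeds the first-best gap $x^*(\overline{\theta})-\overline{x}$. I would then note that the leveling choice $\overline{y}=\underline{x}-\overline{x}$ satisfies this, because $x^*(\overline{\theta})<x^*(\underline{\theta})\le\underline{x}$ by single crossing.
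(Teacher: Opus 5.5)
\textbf{Comparison with the paper.} Your route differs from the paper's in how the inefficiency is established. The paper gives no appendix proof of this proposition; its argument is the text just before it. With linear transfers and $\underline{i}=\overline{i}$, the leveling subsidy is $\overline{y}=\underline{x}-\overline{x}$, so the high-risk firm is topped up to the low-risk loan. The paper then calls this socially inefficient purely verbally: it can induce moral hazard by high-risk firms, mimicry by low-risk firms, and oversubscription by VCs who do not internalize the subsidy, with De Meza--Webb cited for the conclusion. Your Steps 1, 2 and 4 reproduce that skeleton. Step 3 is genuinely new: it replaces the verbal claim with a first-best benchmark and a sign argument on $\overline{\theta}V^\prime-C^\prime$. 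This buys a checkable result: the subsidized high-risk credit $\overline{x}+\overline{y}=\underline{x}$ strictly exceeds $x^*(\overline{\theta})$, and the marginal subsidized units have negative net surplus. It is, however, an intensive-margin (loan-size) notion. The paper's heuristic, like De Meza--Webb, concerns which firms apply and get funded, and your argument does not address that.

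\textbf{Repairs needed.} Step 3 needs two repairs, and Step 4 should be dropped or softened.
\begin{itemize}
\item \emph{Use your fallback, not the main line.} The lender's condition $\overline{\theta}V^\prime(\overline{x})=\overline{\lambda}C^\prime(\mu_x)$ does not order $\overline{x}$ against $x^*(\overline{\theta})$, because $\overline{\lambda}<1$ and $C^\prime(\mu_x)\ge C^\prime(\overline{x})$ pull in opposite directions. So integrate over $(x^*(\overline{\theta}),\underline{x}]$ rather than $(\overline{x},\underline{x}]$. The conclusion is then that the efficient level is overshot, not necessarily that welfare falls relative to the unsubsidized menu.
\item \emph{Justify the second inequality correctly.} In the chain $x^*(\overline{\theta})<x^*(\underline{\theta})\le\underline{x}$, only the first inequality comes from single crossing. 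The second comes from \eqref{eq:MU_LowRiskFirm2} together with the convexity ordering you cite: $\underline{\theta}V^\prime(\underline{x})=\underline{\lambda}C^\prime(\mu_x)\le C^\prime(\mu_x)\le C^\prime(\underline{x})$. This uses $\underline{\lambda}\le 1$ and $C^\prime(\mu_x)>0$, which the same condition forces because $\underline{\theta}V^\prime>0$.
\item \emph{Step 1 assumes the ordering.} $\overline{x}<\underline{x}$ is a maintained assumption of the paper, not a consequence of the two first-order conditions. Those conditions deliver it only when $\underline{\lambda}/\underline{\theta}<\overline{\lambda}/\overline{\theta}$.
\item \emph{Step 4 double-counts the subsidy.} Your IC1 counts $\overline{y}$ twice: once as the top-up that raises high-risk credit to $\underline{x}$ inside $V$, and again as an additive transfer. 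The inequality $\overline{y}>\underline{y}$ is also asserted, not derived. Under consistent accounting, IC1 merely compares the residual transfers in the two bundles, and the leveling construction does not sign that comparison.
\item \emph{The lemma does not bite.} Mimicry relabels low-risk participants rather than raising the true measure $m_{\overline{\theta}}$. Even a larger $m_{\overline{\theta}}$ does not by itself violate \cref{lem:BenevolentSortinngCondition}.
\end{itemize}
Since the statement needs only Step 3, nothing is lost by treating Step 4 as the paper does, as motivation rather than proof.
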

\subsection{Loan monitoring}\label{subsec:LoanMonitor}
The level of monitoring depends on the location on the cost curve. For example, when marginal cost is decreasing there is low monitoring characterized by \eqref{eq:LowMonitor} below. High monitoring occurs when marginal cost is increasing in \eqref{eq:HighMonitor}
\begin{align}
   dC^\prime(C(\widehat{x})) = C^\prime(\widetilde{x})d\widetilde{x}+\bar{\lambda}\dfrac{\partial C^\prime(\widetilde{x})}{\partial \widetilde{x}}d\widetilde{x}=\overbrace{\underbrace{C^\prime(\widetilde{x})}_{(-)}d\bar{\lambda}+\bar{\lambda}\underbrace{C^{\prime\prime}(\widetilde{x})}_{(-)}\underbrace{(\overline{x}-\underline{x})}_{(-)}d\bar{\lambda}}^{\text{Low Monitoring}}<0\label{eq:LowMonitor}\\
   dC^\prime(C(\widehat{x})) = C^\prime(\widetilde{x})d\widetilde{x}+\bar{\lambda}\dfrac{\partial C^\prime(\widetilde{x})}{\partial \widetilde{x}}d\widetilde{x}=\overbrace{\underbrace{C^\prime(\widetilde{x})}_{(+)}d\bar{\lambda}+\bar{\lambda}\underbrace{C^{\prime\prime}(\widetilde{x})}_{(+)}\underbrace{(\overline{x}-\underline{x})}_{(-)}d\bar{\lambda}}^{\text{High Monitoring}}\ge 0\label{eq:HighMonitor}
\end{align}
\subsubsection*{Risky loans}
\tab The inequality in \eqref{eq:LowMonitor} shows that the expected loan received by high-risk firms increases with low monitoring when marginal costs are decreasing. That is, with low monitoring as long as marginal costs are decreasing the lower end of the range will continue to move to the right in \autoref{fig:PubPrivPrinCostStructure}. However, a more likely scenario is that delegated lenders operate in the region where marginal costs are increasing. In that case, for strategic misrepresentation to be supported under the conditions in \eqref{eq:HighMonitor} (i.e. when the relation is positive) we must have:
\begin{align}
   C^{\prime\prime}(\widetilde{x})\leq \dfrac{C^\prime(\widetilde{x})}{\overline{\lambda}(\underline{x}-\overline{x})}\label{eq:SecondOrderCostEffects}
\end{align}
Thus, second-order cost effects must be bounded by the inverse of the expected range of loans provided to high-risk firms in a non-steep region of the cost curve when marginal costs are increasing. This result is more intuitive than the result in \eqref{eq:LowMonitor} which assumes decreasing marginal costs (i.e. $C^\prime(\cdot) < 0$). Another way to look at this relation is as follows:
\begin{align}
   (\underline{x}-\overline{x})\leq \Big(\overline{\lambda}\dfrac{C^{\prime\prime}(\widetilde{x})}{C^\prime(\widetilde{x})}\Big)^{-1}\label{eq:RangeOfLoans}
\end{align}
Here \eqref{eq:RangeOfLoans} captures the intuitive aspects of the loan or preferred-contract decision. For instance, it suggests that the range of loans provided is inversely proportional to the principal's assessment of the firm's risk type $(\overline{\lambda})$ and the riskiness associated with loans given to that firm as captured by the cost effects $\dfrac{C^{\prime\prime}(\widetilde{x})}{C^\prime(\widetilde{x})}$. Further, the relationship clearly shows that the range of loans in which strategic misrepresentation can be profitable depends on the lender's subjective probabilities about $(\overline{\lambda})$ and the curvature of the cost curve.\footnote{This scenario resembles the Arrow-Pratt risk aversion concept. If we take the range of loans to be analogous to the government's risk premium, then as the cost curve gets steeper, i.e. the government becomes more risk-averse, the range of loans it is willing to provide decreases.} With high cost-curve curvature and a high probability of being high-risk, the range of loans is narrow; conversely, low curvature and a low probability of being high-risk widen the range. In the event costs are decreasing, the range is negative, i.e. the loan size for high-risk firms exceeds that for low-risk firms. These events lead to the following proposition:
\begin{prop}\label{prop:StrategicRange}
 Suppose a benevolent principal has subjective probability $(\overline{\lambda})$  for high-risk types ($(\underline{\lambda})$  for low-risk types) and offers loan  $\overline{x}$  ($\underline{x}$  for low-risk type) such that $\overline{x}\in[0,\underline{x})$. If $\widetilde{x}=\overline{\lambda}\overline{x}+(1-\overline{\lambda})\underline{x}$  is the average loan allocation and $C(\cdot)\in C^2[0,\infty)$  is convex and continuous agency costs, then the range of loan allocations in which agents can engage in strategic misrepresentation is as indicated in \eqref{eq:RangeOfLoans} independent of firm preferences.
\end{prop}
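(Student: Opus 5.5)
The plan is to derive \eqref{eq:RangeOfLoans} directly from the comparative statics of the marginal agency cost evaluated at the average loan $\widetilde{x}=\overline{\lambda}\overline{x}+(1-\overline{\lambda})\underline{x}$. Because $C(\cdot)\in C^2[0,\infty)$, the map $\overline{\lambda}\mapsto C^\prime(\widetilde{x}(\overline{\lambda}))$ is continuously differentiable. Holding the menu $(\overline{x},\underline{x})$ fixed, the average loan moves with the prior at rate $d\widetilde{x}/d\overline{\lambda}=\overline{x}-\underline{x}<0$, since $\overline{x}\in[0,\underline{x})$. Following the decomposition used in \eqref{eq:LowMonitor}--\eqref{eq:HighMonitor}, I would write the total response of the cost schedule to a change in the lender's assessment as
\begin{equation*}
d\bigl[\overline{\lambda}\,C^\prime(\widetilde{x})\bigr]=\Bigl[C^\prime(\widetilde{x})+\overline{\lambda}C^{\prime\prime}(\widetilde{x})(\overline{x}-\underline{x})\Bigr]d\overline{\lambda}.
\end{equation*}
This is the expected marginal cost of the high-risk allocation, $\overline{\lambda}C_{\mu_x}$, taken from \eqref{eq:MU_HighRiskFirm2}. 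A weakly positive response is precisely the high-monitoring case \eqref{eq:HighMonitor}.

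Next I would state what ``strategic misrepresentation can be supported'' means. Misrepresentation is profitable only if, as the prior shifts toward high risk, the expected marginal cost of the high-risk allocation does not fall. If it fell, the binding condition \eqref{eq:MU_HighRiskMCx} could be met by simply moving the menu, and no mimicking rent would survive. So I impose that the bracket is nonnegative. This sets up a case split on the sign of $C^\prime(\widetilde{x})$.

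In the increasing-marginal-cost region, $C^\prime(\widetilde{x})>0$ and, by convexity, $C^{\prime\prime}\ge 0$. Rearranging the bracket gives \eqref{eq:SecondOrderCostEffects}, $C^{\prime\prime}(\widetilde{x})\le C^\prime(\widetilde{x})/[\overline{\lambda}(\underline{x}-\overline{x})]$. Dividing by $\overline{\lambda}C^{\prime\prime}/C^\prime>0$ then yields \eqref{eq:RangeOfLoans}. If $C^{\prime\prime}(\widetilde{x})=0$, I would read the bound as $+\infty$, so any range is admissible. In the region where $C^\prime<0$, as in \eqref{eq:LowMonitor}, the bracket is negative, so the bound is vacuous or reverses sign; this matches the remark that the range is then ``negative''.

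Independence from firm preferences then follows by inspection. The inequality involves only $C$, $\overline{\lambda}$ and the endpoints $(\overline{x},\underline{x})$. Neither $V$ nor $\theta$ enters, because the preference side has been absorbed into the first-order conditions of \cref{prop:LoanFirstOrderCondition} and is held fixed when $\overline{\lambda}$ varies.

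The main obstacle is conceptual rather than computational. I need a rigorous link between ``the range in which agents can engage in strategic misrepresentation'' and the sign condition on $d[\overline{\lambda}C^\prime(\widetilde{x})]$. My first attempt would go through IC2: a high-risk firm gains by mimicking only if the low-risk bundle's transfer is not fully offset by cost changes at the margin. I would argue that the transfer adjustment needed to keep IC1 binding moves in the same direction as the expected marginal cost. If that link cannot be made precise, I would instead take the nonnegativity of the bracket as the paper's operational definition of supportable misrepresentation, and present the proposition as the algebraic consequence of that definition.
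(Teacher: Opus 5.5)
Your proposal is correct and follows the paper's own proof. The paper also imposes nonnegativity of the high-monitoring expression $C^\prime(\widetilde{x})+\overline{\lambda}C^{\prime\prime}(\widetilde{x})(\overline{x}-\underline{x})$ from \eqref{eq:HighMonitor} in the increasing-cost region, divides out $d\overline{\lambda}>0$, and rearranges to \eqref{eq:RangeOfLoans}; like your fallback, it simply takes that sign condition as the meaning of ``supportable misrepresentation'' without any IC2-based link, and your extra cases ($C^{\prime\prime}=0$, $C^\prime<0$) are a harmless refinement.
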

\noindent Intuitively, if low-risk firms solve a profit maximization problem subject to convex budget constraints, then the lender solves a dual problem of minimizing costs of lending to the firm. There exists a feasible, if not unique solution, which lies on a separating hyperplane \citep[e.g.,][]{LuenbergerYe2008}. In contrast, if high-risk firms have \citet{FriedmanSavage1948} type preferences, then according to \citet{Markowitz1952} these firms may be risk seeking over small loans, and loans above some intermediate threshold. That is, high-risk firms may be characterized by nonconvex preferences. Even if that were not the case, high-risk firms may require larger loans than low-risk firms. So their feasible regions may exceed that of low-risk firms and violate the separating hyperplane theorem. This scenario is depicted in \cref{fig:PubPrivPartnerHypePlane} and \cref{fig:PubPrivPartnerHypePlaneViolate}.
\begin{figure}[htb!]
   \centering
   \begin{subfigure}[t]{0.92\textwidth}
      \centering
      \begin{tikzpicture}[x=1.05cm,y=0.72cm,>=Latex]
         \draw[->] (0,0) -- (7.0,0) node[below right] {$x$};
         \draw[->] (0,0) -- (0,5.2) node[above] {$\pi,C$};
         \node[below left] at (0,0) {$0$};
         \draw[thick] plot[smooth] coordinates {(0.5,4.65) (1.4,3.78) (2.7,3.02) (4.1,2.55) (6.4,2.25)};
         \draw[thick,dashed] plot[smooth] coordinates {(0.5,0.55) (1.6,1.25) (3.0,1.95) (4.7,2.65) (6.4,3.08)};
         \draw[thick,densely dotted] (0.7,0.95) -- (6.3,4.15);
         \draw[dash dot] (3.8,0) -- (3.8,2.70);
         \node[below] at (3.8,0) {$x^\star$};
         \node[right] at (2.735,3.50) {$C(x^\star)$};
         \node[right] at (6.4,2.25) {$C_L(x)$};
         \node[right] at (6.4,3.08) {$\pi_L(x)$};
         \node[rotate=29,above] at (5.15,3.65) {separating line};
      \end{tikzpicture}
      \caption{Convex low-risk case: the separating line supports a feasible allocation at $x^\star$.}
      \label{fig:PubPrivPartnerHypePlane}
   \end{subfigure}

   \vspace{1.5ex}

   \begin{subfigure}[t]{0.92\textwidth}
      \centering
      \begin{tikzpicture}[x=1.05cm,y=0.72cm,>=Latex]
         \draw[->] (0,0) -- (7.0,0) node[below right] {$x$};
         \draw[->] (0,0) -- (0,5.2) node[above] {$\pi,C$};
         \node[below left] at (0,0) {$0$};
         \draw[thick] plot[smooth] coordinates {(0.5,4.65) (1.4,3.75) (2.7,3.00) (4.1,2.52) (6.4,2.25)};
         \draw[thick,dashed] plot[smooth] coordinates {(0.5,0.45) (1.5,0.88) (2.7,1.65) (4.1,2.95) (6.2,4.50)};
         \draw[thick,densely dotted] (0.7,0.95) -- (6.3,4.15);
         \draw[dash dot] (3.55,0) -- (3.55,2.58);
         \node[below] at (3.55,0) {$x^\star$};
         \node[right] at (3.11,3.38) {$C(x^\star)$};
         \node[right] at (6.4,2.25) {$C_H(x)$};
         \node[right] at (6.2,4.50) {$\pi_H(x)$};
         \node[rotate=29,above] at (5.15,4.05) {candidate line};
         \node[align=center] at (4.65,1.05) {feasible regions\\intersect};
      \end{tikzpicture}
      \caption{Nonconvex high-risk case: the payoff schedule intersects the candidate separation.}
      \label{fig:PubPrivPartnerHypePlaneViolate}
   \end{subfigure}
   \caption{Separating geometry for lender and firm feasible regions. Solid curves denote lender cost schedules, dashed curves denote firm payoff schedules, and dotted lines denote candidate separating hyperplanes. The regular convex case admits a supporting separating line; the high-risk case violates the separating geometry because the feasible regions intersect.}
\end{figure}
\FloatBarrier
\section{Sequential contracts}\label{sec:SequentialContracts}
This section extends the analysis to long-term contracts in which venture capitalists provide monitoring as part of the loan-allocation mechanism. We assume that contracts are spread over two periods indexed by dates 0, 1. At date 0 the firm applies for a loan whereupon it is classified either as high-risk or low-risk. At date 1 the firm's performance is reviewed and it is reclassified accordingly. There are three possibilities: the firm experiences a bad state, good state or becomes bankrupt.
\subsection*{Incentives to apply}
\tab All contracts are assumed to be complete \citep[Ch.~6]{Salanie2005}. That is, we assume that the venture capitalist is committed to the length of the contract and that no renegotiation is allowed. However, if the firm goes bankrupt, venture capitalists receive its salvage value. At date 0 nature selects a firm and the venture capitalist assigns an a priori type probability $\lambda(\pmb{c})$ to the firm after it applies for a loan, where
\begin{align}
   \pmb{c}=[c_{i1},\ldots,c_{im}]^T
\end{align}
is a vector of firm and borrower characteristics obtained when the loan application is filled out.\footnote{Several studies report that consumer credit scores of small firm owners are used in lieu of firm characteristics \citep{BergerCowanFrame2011, DeYoungGlennonNigro2008, FrameSrnivasanWoosley2001}.} $c_{ij}$ is the $j$-th characteristic of the $i$-th firm, $j=1,\ldots,m$ and $i=1,\ldots,N$. We assume that $\lambda(\pmb{c})$ is derived from some credit scoring mechanism \citep{FrameSrnivasanWoosley2001,DeYoungGlennonNigro2008,Cole2016}. \cite[pp.~1259-1260]{Cressy1996} provides details on the variables considered by banks in this regard.  In order to determine the amount of loans to provide, the venture capitalist solves a program similar to the second-best semi-separating equilibrium in \eqref{subeq:FirmSort2}. At date 1 the states are simultaneously revealed to the firm and the lender, and beliefs about firm type are obtained from Bayesian updating from date 0. Good states occur with probability $\mu_g(e)$ where $e\in\{\underline{e},\overline{e}\}$ denotes low $(\underline{e})$ or high $(\overline{e})$ effort by the entrepreneur. We assume that bankruptcy is an exogenous phenomenon that occurs with probability $q$ and reflects the general failure rate of firms in the economy. Further, the optimal amount of loans $x\in\{\underline{x},\overline{x}\}$ determined at date 0 is distributed over the life of the contract. In that way, the principal gets to award the remainder of the loan after observing the performance of the firm in the first period. Suppose $x_0$ is provided at date 0 and the remainder $x-x_0$ is provided at date 1 according to the prevailing state. For instance, in the good state the firm gets the rest of the loan at date 1 with probability 1. However, in a bad state the lender provides the residue with probability $p_b$. If the firm goes bankrupt then it gets zero. This is the monitoring mechanism. At date 0 the firm wants to maximize the amount of loans it gets at date 1 under a no bankruptcy assumption. Therefore it exerts effort at unit cost such that the expected loan is maximized as follows:
\begin{align}
   e^\star\in\arg\max\{\mu_g(e)(x-x_0)+(1-\mu_g(e))p_b(x-x_0)-e\}
\end{align}
The firm participates in the program at date 1 if the net loan amount it receives is positive. Thus,
\begin{align}
   \mu_g(e^\star)(x-x_0)+(1-\mu_g(e^\star))p_b(x-x_0)-e^\star\geq 0\\
   x-x_0\geq\dfrac{e^\star}{\mu_g(e^\star)+(1-\mu_g(e^\star))p_b}>e^\star
\end{align}
Thus, the residual amount of loans provided at the end of the first period must be greater than the effort expended to obtain the loan. This result is at the core of one of the complaints about the effectiveness of benevolent programs. Firms complain that the high cost of applying for the loan does not make it worthwhile. However, the key to this result is that firms are provided with the residual amount of loans when bad states occur. \citet[pp.~26-27]{ArpingLoranthMorrison2010} reports that that is indeed the case for loans guaranteed by the Small Business Administration in the US between 1990 and 2000. Note that effort levels decline when the probability that the lender provides the loan in bad states increases.
\subsection{Welfare effects of program}\label{subsec:BorrowerMisclassified}
In this section we examine the welfare effects of error probabilities and strategic misrepresentation. That is, if the lender assigns an ex-ante probability to firm type based on the characteristic vector provided by the firm at date 0, then at date 1 after states are revealed the lender now has more information to update the initial probabilities. Error probabilities and strategic misrepresentation occurs when an ex-post outcome suggests that the firm was misclassified. From the public sector perspective, the inefficiencies associated with error probabilities result in welfare loss. At date 1 we assume that states are revealed through the performance of the firm in the prior period. Indicators such as sales and profits and procurement of new contracts send a signal to the lender that the firm is in a good state. In bad states sales and profits are low and a required threshold of new business may not have been procured. Under those scenarios the lender engages in Bayesian updating of risk classification. Assume that sales is a highly correlated signal (perhaps noisy) of states so that $s_G$ and $s_B$ are reported sales in good and bad states respectively. Then the corresponding posterior probabilities for high-risk and low-risk firms are given by:
\begin{align}
   \bar{\lambda}(\pmb{c}|\,s_B) &= \dfrac{(1-\mu_g(\bar{e}))\bar{\lambda}(\pmb{c})}{(1-\mu_g(\bar{e}))\bar{\lambda}(\pmb{c})+(1-\mu_g(\underline{e}))\underline{\lambda}(\pmb{c})}\label{PosteriorHighRisk}\\
   \underline{\lambda}(\pmb{c}|\,s_G) &= \dfrac{\mu_g(\underline{e})\underline{\lambda}(\pmb{c})}{\mu_g(\underline{e})\underline{\lambda}(\pmb{c})+\mu_g(\bar{e})\bar{\lambda}(\pmb{c})}\label{PosteriorLowRisk}
\end{align}
Error probabilities for high-risk and low-risk are given respectively by
\begin{align}
   \underline{\lambda}(\pmb{c}|\,s_B) = 1-\bar{\lambda}(\pmb{c}|\,s_B)\;\text{and}\;\bar{\lambda}(\pmb{c}|\,s_G)=1-\underline{\lambda}(\pmb{c}|\,s_G)
\end{align}
These probabilities reflect the event of a good \emph{ex-ante} decision being followed by a bad \emph{ex-post} outcome or state or a bad \emph{ex-ante} decision followed by a good state \emph{ex-post}. This brings us to the following lemma for reducing welfare loss.
\begin{lem}\label{lem:EqualEffort}
  A necessary condition for minimizing welfare loss from misclassification is that effort levels by low-risk firms and high-risk firms must be equal across states so that $\mu_g(\bar{e})=\mu_g(\underline{e})$.
\end{lem}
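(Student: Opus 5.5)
The plan is to make precise the welfare-loss object implicit in \autoref{subsec:BorrowerMisclassified} and then show that it is minimized only when the performance signal carries no type-dependent effort distortion. Write $\rho=\mu_g(\bar{e})/\mu_g(\underline{e})$ for the good-state likelihood ratio and $\sigma=(1-\mu_g(\bar{e}))/(1-\mu_g(\underline{e}))$ for the bad-state ratio. By \eqref{PosteriorHighRisk}--\eqref{PosteriorLowRisk}, each error probability can be written as a function of the prior and these ratios:
\begin{align*}
\underline{\lambda}(\pmb{c}|\,s_B)=\frac{\underline{\lambda}(\pmb{c})}{\sigma\bar{\lambda}(\pmb{c})+\underline{\lambda}(\pmb{c})},\qquad
\bar{\lambda}(\pmb{c}|\,s_G)=\frac{\rho\,\bar{\lambda}(\pmb{c})}{\underline{\lambda}(\pmb{c})+\rho\,\bar{\lambda}(\pmb{c})}.
\end{align*}
I will take misclassification to mean that the ex-post classification departs from the date-0 classification $\lambda(\pmb{c})$ produced by the scoring rule. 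This matches the text's description of a good ex-ante decision followed by a bad ex-post state, or the reverse.

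Concretely, I will define the welfare loss as the expected reclassification distortion,
\begin{align*}
\mathcal{W}=\Pr(s_B)\,\phi\bigl(\underline{\lambda}(\pmb{c}|\,s_B)-\underline{\lambda}(\pmb{c})\bigr)+\Pr(s_G)\,\phi\bigl(\bar{\lambda}(\pmb{c}|\,s_G)-\bar{\lambda}(\pmb{c})\bigr),
\end{align*}
where $\phi\ge 0$ is convex with $\phi(0)=0$ and $\phi(z)>0$ for $z\neq 0$ (quadratic loss is the leading case). It follows immediately that $\mathcal{W}\ge 0$. Moreover, $\mathcal{W}=0$ exactly when both posteriors coincide with the priors. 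For any interior prior $0<\bar{\lambda}(\pmb{c})<1$, the displayed formulas show this happens if and only if $\sigma=1$ and $\rho=1$, and each of these conditions is equivalent to $\mu_g(\bar{e})=\mu_g(\underline{e})$.

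Necessity then follows by contraposition. If $\mu_g(\bar{e})\neq\mu_g(\underline{e})$, say $\mu_g(\bar{e})<\mu_g(\underline{e})$, then $\rho<1<\sigma$, so both posteriors move strictly away from the priors and $\mathcal{W}>0$. Lowering the effort gap weakly reduces $\mathcal{W}$, and eliminating it attains the global minimum $\mathcal{W}=0$. Hence no effort profile with unequal good-state probabilities can minimize welfare loss.

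To connect this with the effort problem stated just before the lemma, I will also check the first-order condition. Differentiating $\mathcal{W}$ in $\mu_g(\bar{e})$ with $\mu_g(\underline{e})$ held fixed gives an expression proportional to $\phi'(\cdot)$ evaluated at the posterior revisions. That derivative vanishes only where the revisions are zero, which again yields $\mu_g(\bar{e})=\mu_g(\underline{e})$ as the stationarity condition.

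The main obstacle is the first step: the paper does not define welfare loss explicitly. If one instead used the raw error probabilities as the loss, equal effort would make the signal uninformative, and the conclusion could fail. I will therefore justify the reclassification-based loss from the paper's own framing. Strategic misrepresentation is detected precisely as a type-dependent divergence in performance, and the welfare cost arises from ex-post reallocation of the residual loan $x-x_0$ away from the date-0 assignment. I will state that modelling choice explicitly as part of the proof.
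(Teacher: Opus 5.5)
Your argument is valid for the loss $\mathcal{W}$ you define, but it takes a different route from the paper. The paper introduces no loss functional. It posits a ``posterior-equality criterion'': at the optimum the classification is state-invariant, $\bar{\lambda}(\pmb{c}|\,s_B)=\bar{\lambda}(\pmb{c}|\,s_G)$. It then writes both sides by Bayes' rule, cross-multiplies, and cancels the positive priors to obtain $\mu_g(\bar{e})=\mu_g(\underline{e})$. You instead penalize each posterior's revision away from the prior, and show $\mathcal{W}\ge 0$ with equality iff $\rho=\sigma=1$. The two criteria have the same zero set. Since the prior is the $\Pr(s)$-weighted average of the posteriors, the posteriors after $s_G$ and $s_B$ coincide iff both equal the prior. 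So the underlying fact is the same in both proofs: a Bayesian posterior is signal-invariant iff the likelihoods coincide.

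What your version buys is a genuine minimization statement. Necessity follows from nonnegativity and the uniqueness of the zero of an explicit objective, whereas the paper simply asserts that the optimum satisfies the equality. The paper's version buys brevity and avoids having to justify an auxiliary $\phi$. Neither proof derives the lemma from the paper's own $W(\pmb{c})$ in \eqref{eq:SurplusW}. Your closing caveat is accurate: if $\mu_g(\bar{e})<\mu_g(\underline{e})$, both error probabilities fall strictly below their priors, so a loss increasing in the raw error probabilities would favor an effort gap rather than forbid it.

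One small correction to your first-order paragraph: $\partial\mathcal{W}/\partial\mu_g(\bar{e})$ is not proportional to $\phi'$, because $\Pr(s_G)$ and $\Pr(s_B)$ also move. The clean argument goes through convexity. Set $d=\mu_g(\bar{e})-\mu_g(\underline{e})$ and hold $\mu_g(\underline{e})$ fixed as you do. Both revisions then equal $\bar{\lambda}(\pmb{c})\underline{\lambda}(\pmb{c})\,d/\Pr(s)$, with $\Pr(s)$ affine in $d$. Each term of $\mathcal{W}$ is therefore a perspective of $\phi$ composed with an affine map of $d$, hence convex in $d$. This justifies both your monotonicity claim and the uniqueness of the stationary point, though neither is needed for necessity.
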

	Let $\check{x}=\bar{\lambda}(\pmb{c}|\,s_B)(\bar{x}-\bar{x}_0)+\underline{\lambda}(\pmb{c}|\,s_G)(\underline{x}-\underline{x}_0)$  be the weighted average residual loan used by the lender at date 1 to assign the cost of providing the loans. The expected welfare loss due to misclassification is represented by ``wrongful" surplus extraction given by:
\begin{align}
   W(\pmb{c}) &= \big(\bar{\theta}V(\bar{x}-\bar{x}_0)-C(\check{x})\big)\bar{\lambda}(\pmb{c}|\,s_G)+\big(\underline{\theta}V(\underline{x}-\underline{x}_0)-C(\check{x})\big)\underline{\lambda}(\pmb{c}|\,s_B)\label{eq:SurplusW}
\end{align}
The public policy objective is to \emph{minimize} $W(\pmb{c})$. This can be done only by minimizing $\bar{\lambda}(\pmb{c}|\,s_B)$  and  $\underline{\lambda}(\pmb{c}|\,s_B)$, which can only be minimized with respect to the characteristics vector. Hence the dependence of $W$ on $\pmb{c}$. The minimization problem depends on the specification of the model which is used to determine  $\lambda$.   We specify a logit model at date 0 so that:\footnote{The logit model predicts a probability of bankruptcy conditioned on borrower characteristics $\pmb{c}$. We chose logit because \citep{Lo1986} found that a logit model was more robust than the discriminant analysis approach to bankruptcy prediction popularized by \citet{Altman1968}. \citet{Cressy1996} used a probit model in his analysis. However, it is known that logit and probit models are quite similar.}
\begin{align}
   \operatorname{logit}(\lambda_{0,i}\mid\pmb{c}_0) &= \ln\Big(\frac{\lambda_{0,i}}{1-\lambda_{0,i}}\Big)=\pmb{c}^T_0\pmb{\beta}_0+\epsilon_{0,i}
\end{align}
where $\epsilon_{0,i}$ is an error term. At date 1 the model is updated to incorporate state variables $\pmb{c}_1$ that include sales and performance in the prior period, such as days sales outstanding, receivables, and days' receipts, so that
\begin{align}
   \operatorname{logit}(\lambda_{1,i}\mid\pmb{c}_1) &= \ln\Big(\frac{\lambda_{1,i}}{1-\lambda_{1,i}}\Big)=\pmb{c}^T_1\pmb{\beta}_1+\eta_{1,i}\label{eq:PosteriorOdds}
\end{align}
The value of $\pi$ obtained from estimates of \eqref{eq:PosteriorOdds} is minimized only to the extent that the model is correctly specified. Ideally, the principal would minimize all error probabilities simultaneously. For tractability, suppose that the relevant error probabilities are minimized at a common value.\footnote{This assumption is analogous to imposing symmetry between Type I and Type II classification errors. It allows the welfare-loss condition to be expressed in terms of a single posterior misclassification index.} Thus,
\begin{align}
   \bar{\lambda}_{\text{min}}(\pmb{c}|\,s_G)=\underline{\lambda}_{\text{min}}(\pmb{c}|\,s_B)=\lambda_{\text{min}}(\pmb{c}|\,s_G,s_B)\label{eq:LambdaEqual}
\end{align}
Dropping the ``min" and writing $\lambda(\pmb{c})$ in \eqref{eq:LambdaEqual} for notational convenience we find that \eqref{eq:SurplusW} is reduced to
\begin{align}
   W(\pmb{c}) &= \Big(A - 2C(\check{x})\Big)\lambda(\pmb{c})\label{eq:SurplusWmod},\;\text{where} \;A =\Big(\bar{\theta}V(\bar{x}-\bar{x}_0)+\underline{\theta}V(\underline{x}-\underline{x}_0)\Big)
\end{align}
Because misclassification can impose high program costs, the policy problem is to reduce $\lambda(\pmb{c})$ and therefore $W(\pmb{c})$. This result is reflected in the following
\begin{lem}\label{lem:MinimalWelfareLoss}
   The minimal welfare loss from program-related firm misclassification is $W^\star(\pmb{c})=A-2C^\star(\check{x}(\pmb{c}))=0$ where $C^\star(\check{x})=\frac{1}{2}\Big(\bar{\theta}V(\bar{x}-\bar{x}_0)+\underline{\theta}V(\underline{x}-\underline{x}_0)\Big)$
\end{lem}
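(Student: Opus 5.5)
The plan is to work directly from the reduced welfare-loss expression \eqref{eq:SurplusWmod}, $W(\pmb{c})=(A-2C(\check{x}))\lambda(\pmb{c})$. This is a product of a cost-adjusted surplus term and the common posterior misclassification index $\lambda(\pmb{c})$ imposed in \eqref{eq:LambdaEqual}. The policy objective is to minimize $W(\pmb{c})$. I will treat the two factors separately: the classification factor $\lambda(\pmb{c})$, which is controlled through the characteristics vector via the logit specification \eqref{eq:PosteriorOdds}, and the surplus factor $A-2C(\check{x})$, which is controlled through the residual loan menu and the weighted average residual loan $\check{x}$.

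\textbf{First step.} I will show that the welfare loss cannot be negative at an admissible allocation. Here $W$ measures ``wrongful'' surplus extraction, so a negative value would mean that misclassification generates a net social gain. That would violate the participation constraints IR1--IR2 carried forward from the static problem, because the lender cannot extract more than the surplus net of cost.

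Concretely, I will argue that $C(\check{x})\le \tfrac12 A$ on the feasible set. The argument has two parts. First, $C$ is convex and continuous, as assumed in \cref{prop:StrategicRange}, and $\check{x}$ is a convex combination of the residual loans. Second, the first-order logic of \cref{prop:LoanFirstOrderCondition} and \cref{cor:LoanOptimalityMC_Condition} has lenders expanding loans until marginal utility equals expected marginal cost. Together these give $A-2C(\check{x})\ge 0$. Since also $\lambda(\pmb{c})\in[0,1]$, it follows that $W(\pmb{c})\ge 0$, so $0$ is a lower bound.

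\textbf{Second step.} I will show that the bound is attained. By continuity of $C$ and the intermediate value theorem, applied along the loan-allocation range $[\overline{x},\underline{x}]$ where $C$ rises toward the surplus level, there is a residual allocation with $C(\check{x}(\pmb{c}))=C^\star(\check{x}):=\tfrac12\big(\bar{\theta}V(\bar{x}-\bar{x}_0)+\underline{\theta}V(\underline{x}-\underline{x}_0)\big)$. At that point the surplus factor vanishes, so $W^\star(\pmb{c})=A-2C^\star(\check{x}(\pmb{c}))=0$ regardless of the value of $\lambda(\pmb{c})$.

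\textbf{Interpretation.} I will add a remark connecting this to \cref{lem:EqualEffort}. Under equal effort the posteriors \eqref{PosteriorHighRisk}--\eqref{PosteriorLowRisk} collapse to the priors. The classification factor then offers no further leverage, so the minimum must come from the cost-matching condition. This also explains why the statement is written in terms of the factor $A-2C^\star$ alone.

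\textbf{Main obstacle.} The hard step is justifying that $0$ is genuinely the minimum, that is, ruling out $A-2C(\check{x})<0$. This does not follow mechanically from the definitions. It needs an economic feasibility restriction: total residual surplus must cover twice the cost of the average residual loan.

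My first attempt will derive this from binding IR constraints at date 1, together with the participation inequality $x-x_0>e^\star\ge 0$ from the sequential section, which ensures the residual loans are strictly positive so that $V$ is evaluated in its increasing region. If that derivation does not close, I will impose $C(\check{x})\le \tfrac12 A$ explicitly as the feasibility condition under which the lemma holds. The lemma then becomes the statement that, among feasible menus, the welfare loss is minimized where this constraint binds.
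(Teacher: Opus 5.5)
\textbf{Step 1 does not go through.} The step you flag as the main obstacle is a genuine gap, and your first attempt at closing it will fail. Nothing in the model bounds the level $C(\check{x})$ by $\tfrac12 A$:
\begin{itemize}
\item \cref{prop:LoanFirstOrderCondition} and \cref{cor:LoanOptimalityMC_Condition} equate \emph{slopes} (marginal utility and expected marginal cost). Convexity of $C$ plus a first-order condition yields no comparison between the level $C(\check{x})$ and the level $A$.
\item IR1--IR2 restrict $\theta V(x)+y$. They involve the transfer $y$, not the lender's cost, so they cannot bound $C$.
\item The date-1 participation inequality $x-x_0>e^\star$ only says residual loans exceed effort.
\item You also assert that $\check{x}$ is a convex combination of the residual loans, which is false in general. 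Its weights $\bar\lambda(\pmb c|\,s_B)$ and $\underline\lambda(\pmb c|\,s_G)$ are posteriors under \emph{different} signals, and they sum to one only when $\mu_g(\bar e)=\mu_g(\underline e)$. For example, with $\bar\lambda(\pmb c)=\tfrac12$, $\mu_g(\bar e)=0.8$, $\mu_g(\underline e)=0.2$, both weights equal $0.2$. Even with convex weights, Jensen's inequality only compares $C(\check{x})$ with costs at the residual loans, not with $A$.
\end{itemize}
The attainment step has the same problem. An intermediate-value argument needs a sign change of $A-2C(\check{x})$ as a function of the residual menu. Both $A$ and $\check{x}$ move with $\bar x-\bar x_0$ and $\underline x-\underline x_0$, so the claim that $C$ ``rises toward the surplus level'' is an assumption, not a consequence of the model.

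\textbf{What the paper does, and what you are missing.} The paper does not attempt to prove either nonnegativity or attainment. Its proof is purely an identification step. Because $\lambda(\pmb c)$ is a strictly positive misclassification probability, $W(\pmb c)=0$ if and only if $A-2C(\check{x})=0$. Solving gives $C^\star=\tfrac12 A$, and the zero-loss point is simply adopted as the benchmark. Your fallback, stating nonnegativity or the zero-loss benchmark as a maintained restriction, is therefore exactly the level at which the lemma is established. You should go there directly rather than through Step 1.

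What your plan lacks is the direction the paper actually uses. You show that a vanishing bracket implies $W=0$ ``regardless of $\lambda$'' with $\lambda\in[0,1]$, which is only sufficiency. To conclude that the cost level at the minimum is $C^\star$ (the ``where'' clause of the lemma), you need strict positivity of $\lambda(\pmb c)$. The logit specification supplies this, since $\lambda(\pmb c)\in(0,1)$. If $\lambda(\pmb c)=0$ were admissible, zero loss would be compatible with any cost level and would not pin down $C^\star$.
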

\begin{prop}[Minimal welfare loss]\label{prop:MinimalWelfareLoss}
   The welfare loss is minimized when marginal cost is weighted by the marginal effect of firm characteristics on posterior classification probabilities, so that
\begin{equation}
\sum_i\beta_i\frac{\partial C(\check{x}(\pmb{c}))}{\partial c_i}=0.
\end{equation}
\end{prop}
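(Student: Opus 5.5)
The plan is to treat the reduced welfare-loss function \eqref{eq:SurplusWmod}, $W(\pmb{c})=\bigl(A-2C(\check{x}(\pmb{c}))\bigr)\lambda(\pmb{c})$, as a function of the characteristics vector and derive the first-order condition with respect to $\pmb{c}$. The common posterior misclassification probability $\lambda(\pmb{c})$ comes from the date-1 logit \eqref{eq:PosteriorOdds}, so that $\lambda(\pmb{c})=\Lambda(\pmb{c}^T\pmb{\beta})$ with $\Lambda(z)=e^z/(1+e^z)$. Here I suppress the time index and the error term, and write $\beta_i$ for the coefficient on $c_i$. The weighted residual loan $\check{x}(\pmb{c})$ depends on $\pmb{c}$ through the posteriors $\bar{\lambda}(\pmb{c}|\,s_B)$ and $\underline{\lambda}(\pmb{c}|\,s_G)$. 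Hence $C(\check{x}(\pmb{c}))$ is a composite function of $\pmb{c}$. The term $A$ does not depend on $\pmb{c}$, because the loan menu and the date-0 tranches are fixed at date 0.

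First, I differentiate $W$ with respect to each $c_i$:
\begin{equation*}
\frac{\partial W}{\partial c_i}=-2\frac{\partial C(\check{x}(\pmb{c}))}{\partial c_i}\lambda(\pmb{c})+\bigl(A-2C(\check{x}(\pmb{c}))\bigr)\Lambda'(\pmb{c}^T\pmb{\beta})\beta_i .
\end{equation*}
Second, I evaluate this at the minimal-loss point identified in \cref{lem:MinimalWelfareLoss}, where $A-2C^\star(\check{x})=0$. The second term then vanishes, and the stationarity conditions reduce to $\lambda(\pmb{c})\,\partial C/\partial c_i=0$ for each $i$.

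Third, I aggregate these conditions along the direction in which the scoring model moves the classification probability. Weighting the $i$-th condition by $\beta_i$ and summing gives $\lambda(\pmb{c})\sum_i\beta_i\,\partial C/\partial c_i=0$. Since $\lambda(\pmb{c})\in(0,1)$ under the logit specification, dividing by $\lambda(\pmb{c})$ yields $\sum_i\beta_i\,\partial C(\check{x}(\pmb{c}))/\partial c_i=0$. The interpretation is that $\beta_i\Lambda'$ is the marginal effect of $c_i$ on the posterior log-odds. The condition therefore says that marginal monitoring cost, weighted by these marginal classification effects, nets to zero.

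The same condition can be reached more directly by differentiating along the index $z=\pmb{c}^T\pmb{\beta}$. By the chain rule, $\nabla_{\pmb{c}}(\cdot)\cdot\pmb{\beta}$ is the derivative in the score direction, so the displayed sum is the directional derivative of $C$ along the gradient of the logit index. The claim is then that the cost is stationary in the score direction at the optimum.

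The main obstacle is justifying the aggregation step rather than the differentiation. The componentwise conditions $\partial C/\partial c_i=0$ imply the weighted sum, but the weighted sum is weaker. I would argue that only movements of $\pmb{c}$ that change the classification index are policy-relevant: directions orthogonal to $\pmb{\beta}$ leave $\lambda$ unchanged. Restricting the minimization to the score direction therefore gives exactly the stated condition.

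A second point to check is that $\check{x}$ is differentiable in $\pmb{c}$. This follows because the Bayes posteriors \eqref{PosteriorHighRisk} and \eqref{PosteriorLowRisk} are smooth in the logit probabilities, and $C\in C^2$ by the assumptions of \cref{prop:StrategicRange}.

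Finally, I would confirm that a point with $A=2C^\star$ is a minimum rather than a maximum. I expect to do this by noting that $W\geq 0$ is needed for it to be interpretable as a welfare loss, so zero is the infimum, as \cref{lem:MinimalWelfareLoss} already asserts.
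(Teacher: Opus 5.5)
Your argument is correct and is essentially the paper's proof. In both, you differentiate $W(\pmb{c})=\bigl(A-2C(\check{x}(\pmb{c}))\bigr)\lambda(\pmb{c})$ with the logit $\lambda$, the $\bigl(A-2C\bigr)$ term vanishes at the zero-loss benchmark of \cref{lem:MinimalWelfareLoss}, and you divide by $\lambda(\pmb{c})>0$. The paper does this in one step via the directional derivative along $d\pmb{c}=\pmb{\beta}\,dt$, which is exactly your ``more direct'' route, so the componentwise detour and the aggregation worry are unnecessary (the stated condition is only claimed as a necessary condition). One small slip: $\beta_i$, not $\beta_i\Lambda'$, is the marginal effect of $c_i$ on the log-odds; $\beta_i\Lambda'=\lambda(1-\lambda)\beta_i$ is the marginal effect on the probability.
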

Thus, minimal welfare loss is linked to posterior classification probabilities $\lambda$. This implies that date-1 audits can reduce welfare loss when they improve posterior classification. Economically, the condition says that the program should spend monitoring resources on characteristics only when those characteristics improve classification enough to justify their marginal cost. If a borrower characteristic is expensive to verify but has little effect on the posterior probability of misclassification, the weighted marginal-cost term is large relative to its informational value and the program should not rely heavily on it. Conversely, characteristics that sharply reduce posterior misclassification can justify higher monitoring or auditing costs. The welfare result therefore translates into an implementable audit rule: rank borrower characteristics by their marginal classification value per unit of monitoring cost.
\section{Aggregate Proof-of-Concept Evidence}\label{sec:ProofOfConcept}

The model is written at the level of borrower type, loan size, monitoring cost, and posterior classification. A direct empirical test would therefore require loan- or investment-level data containing borrower characteristics, ex ante classification, the amount and terms of each financing, monitoring intensity, subsequent performance, and any reclassification after new information is observed. In the United States, data with that level of granularity may be available for Small Business Investment Companies (SBICs) or Minority Enterprise Small Business Investment Companies (MESBICs).\footnote{An SBIC is a privately owned and managed investment firm that is licensed and regulated by the U.S. Small Business Administration (SBA) to provide critical equity capital, long-term loans, and hybrid financing to qualifying small businesses and startups.\\
MESBICs are government-chartered venture firms that can invest only in companies that are at least 51 percent owned by members of a minority group or persons recognized by the rules that govern MESBICs. They provide debt and equity capital to new, small independent businesses. Criteria for investment and size and type of investment vary from one firm to another.} In particular, the most useful data for testing the mechanism directly would be investment-level SBIC Form 1031 information and fund-level Form 468 information, linked to firm outcomes and delegated-monitor characteristics. Without access to that level of granularity, the empirical exercise in this section is deliberately limited. It asks whether publicly available aggregate data are at least consistent with the allocation margins emphasized by the theory.

The proof-of-concept exercise uses publicly available SBA SBIC State-by-State Financing Reports and SBIC Annual and Performance Reports \citep{SBAStateFinancing2026,SBAAnnualPerformance2026}. The state reports provide a state-year panel for fiscal years 2018--2025 with the number of financings, the number of businesses financed, and financing amounts. The annual reports provide program context and identify changes in program design, including the recent emphasis on program growth, investment priorities, and the Accrual Debenture structure. The Internet Appendix reports the sample construction, the 432 state-year observations used in the exercise, the variable dictionary, the concentration table, the figures, and the descriptive regression tables. Because the public state-year data do not identify individual borrowers, risk types, monitoring actions, or posterior beliefs, the exercise is descriptive rather than causal.

The aggregate evidence supports one central implication of the model: public-private capital is not allocated uniformly across the policy space. In fiscal year 2025, the five largest recipient states accounted for 42.8 percent of SBIC financing. The state-level Herfindahl-Hirschman index for financing amounts was 0.054 and the cross-state Gini coefficient was 0.61. These figures indicate substantial geographic concentration. That pattern is consistent with delegated allocation: private intermediaries appear to deploy publicly supported capital where deal flow, expected repayment, monitoring technology, and expected returns are favorable, rather than mechanically allocating funds in proportion to population or broad public-policy need.

The state-year evidence also speaks to the model's rationing intuition. Comparing each state's share of SBIC financing with its population share produces an allocation gap. States below parity receive less SBIC financing than their population share would predict. When state-level policy-target proxies are added, the descriptive exercise can ask whether states with weaker socioeconomic indicators receive systematically lower financing shares or lower financing per capita. Negative relationships would be consistent with the model's claim that delegated monitors may ration capital away from locations where screening and monitoring are more costly. However, because the public aggregate data do not identify the borrower pool, application demand, risk type, or monitoring cost, this evidence should be interpreted as suggestive only.

The public data are also useful for a limited program-design test. The annual reports describe changes in the structure and priorities of the SBIC program. A state-year panel can therefore examine whether financing volumes shift around observable program-design periods, such as the post-2023 period associated with the Accrual Debenture and broader program changes. Such tests can show whether aggregate financing volume responds to design margins, but they do not identify the mechanism by which program design affects borrower sorting, delegated monitoring, or welfare loss.

Thus, the public aggregate data support the model at the level of allocation patterns: SBIC capital is geographically concentrated, allocation gaps can be measured, and financing volumes can be compared across program-design periods. More granular data would be needed to test the paper's sharper theoretical predictions. First, borrower-level data are required to determine whether high-risk and low-risk firms receive the loan schedules implied by the separating contracts. Second, application-level data are required to distinguish credit rationing from weak demand. Third, monitoring and performance data are required to test whether posterior classification probabilities update as in the sequential model. Finally, fund-level and deal-level data are required to estimate whether welfare losses are minimized when marginal monitoring costs are weighted by the marginal effect of firm characteristics on classification probabilities. The proof-of-concept evidence therefore motivates the model and establishes empirical feasibility, but it does not substitute for a full micro-level test.

Because the public files do not contain the borrower-level fields needed for those sharper tests, the Internet Appendix also reports a simulation backstop. The simulation generates 60,000 borrower applications to 120 delegated SBIC/MESBIC-style funds over 36 states and 12 years. It is calibrated to the mechanisms in the model: delegated screening is noisy, target-area borrowers are harder to monitor, a policy-targeted intermediary channel can partially mitigate rationing, subsidies affect approval incentives, and monitoring affects default. The simulated regressions recover the model's comparative statics. Target-area borrowers have lower approval probabilities, the interaction between MESBIC status and target-area status is positive in the approval equation, screening signals and collateral raise approval and financing amounts, MESBIC financing is more debt intensive, and project quality reduces default. The exercise is not empirical evidence about historical MESBIC lending; it is a design-validation exercise showing that the proposed granular empirical tests have recoverable content if application-, monitoring-, and performance-level data become available.

\section{Conclusion}\label{sec:Conclusion}
This paper develops a mechanism-design model of delegated monitoring in public-private partnerships for access to credit. The main result is that agency costs can defeat benevolent public intent in two distinct ways. Imperfect screening and interest-rate sorting can generate underinvestment by excluding target firms that public policy is meant to support. Subsidized credit can generate overinvestment by making high-risk firms privately attractive to delegated monitors even when the induced allocation is socially inefficient.

The mechanism works through borrower misclassification and delegated-monitor incentives. Firms may gain by portraying themselves as target firms, while delegated monitors use imperfect application signals and credit-scoring rules to assign risk type. As agency costs rise, the feasible range of loans offered by the partnership contracts with expected risk. In the sequential extension, posterior updating disciplines the program only when firm characteristics and performance signals are sufficiently informative. Welfare loss from misclassification is minimized when the marginal cost of a firm characteristic is weighted by that characteristic's marginal effect on posterior classification probabilities.

The policy implication is that public-private credit programs should not be evaluated only by the volume of loans originated or the number of firms reached. The relevant design margin is whether the partnership induces additional lending to target firms without weakening screening, monitoring, and repayment incentives. Effective PPP design therefore requires explicit attention to delegated-monitor agency costs, risk-based subsidy rules, audit technology, and posterior performance evaluation.

The proof-of-concept evidence is consistent with this interpretation. In lieu of granular borrower-level SBIC or MESBIC data, aggregate SBIC state-year reports show geographic concentration, measurable allocation gaps, and variation across program-design periods. A simulation backstop then shows that the model's borrower-level predictions are recoverable in data with applications, delegated funds, borrower states, screening signals, contract type, loan size, and performance. These exercises support the model's allocation-pattern and empirical-design implications, while leaving historical borrower-level identification for future work with application, monitoring, and performance data.

With access to granular data, future empirical research in this area might investigate:
\begin{itemize}
	\item 	Do delegated public-private lenders allocate less capital to riskier target firms?
	\item 	Does public leverage induce overinvestment in weak borrowers?
	\item 	Are minority/target firms screened differently from non-target firms?
	\item 	Does monitoring intensity or loan size vary with observable risk?
	\item 	Do delegated intermediaries ration credit geographically or by industry despite public-policy objectives?
\end{itemize}

\singlespacing
\bibliographystyle{chicago}        
\section*{}
\addcontentsline{toc}{section}{References} 
\bibliography{PubPrivPartner_arXiv}         

\appendix
\setcounter{section}{0}
\setcounter{equation}{0}
\renewcommand{\thesection}{A}
\renewcommand{\theequation}{A.\arabic{equation}}
\renewcommand{\theHequation}{A.\arabic{equation}}
\renewcommand{\theHsection}{A}
\section{Proofs}\label{app:proofs}

\subsection{Proof of Proposition~\ref{prop:LoanFirstOrderCondition}}
\begin{proof}
Because firm surplus is type separable, the continuous second-best objective can be written as
\begin{equation}
   \Phi(x)=\int_{\overline{x}}^{\underline{x}}\int_{\overline{\theta}}^{\underline{\theta}}
   \big[\theta V(x)-\lambda(\theta)C(x)\big]dF(x)dG(\theta).
\end{equation}
For an interior optimum $\tilde{x}\in(\overline{x},\underline{x})$, differentiation under the integral sign gives
\begin{equation}
   \int_{\overline{\theta}}^{\underline{\theta}}
   \big[\theta V_x(\tilde{x})-\lambda(\theta)C_x(\tilde{x})\big]dG(\theta)=0.
\end{equation}
Continuity of the integrand and the integral mean-value theorem imply that some representative type $\theta^*\in[\overline{\theta},\underline{\theta}]$ satisfies
\begin{equation}
   \theta^*V_x(\tilde{x})-\lambda(\theta^*)C_x(\tilde{x})=0.
\end{equation}
The monotonicity condition on $G$ selects the modal classification type, $\theta^*=\arg\max_\theta g(\theta)$. Hence the optimal interior loan belongs to
\begin{equation}
   \bigl\{x: \theta^*V_x-\lambda(\theta^*)C_x=0,\;\theta^*=\arg\max_\theta g(\theta)\bigr\}\cap[\overline{x},\underline{x}],
\end{equation}
which proves the claim.
\end{proof}

\subsection{Proof of Corollary~\ref{cor:LoanOptimalityMC_Condition}}
\begin{proof}
The first-order condition from Proposition~\ref{prop:LoanFirstOrderCondition} is
\begin{equation}
   \theta^*V_x(\tilde{x})=\lambda(\theta^*)C_x(\tilde{x}).
\end{equation}
The left-hand side is the marginal utility of an additional unit of lending for the representative type. The right-hand side is marginal monitoring cost weighted by the probability that the loan is assigned to that type. Thus marginal utility equals expected marginal cost.
\end{proof}

\subsection{Proof of Proposition~\ref{prop:StrategicRange}}
\begin{proof}
In the increasing-cost region, the monitoring condition in the main text requires
\begin{equation}
   C^\prime(\widetilde{x})d\overline{\lambda}+\overline{\lambda}C^{\prime\prime}(\widetilde{x})(\overline{x}-\underline{x})d\overline{\lambda}\geq 0.
\end{equation}
For $d\overline{\lambda}>0$, this is equivalent to
\begin{equation}
   C^\prime(\widetilde{x})\geq \overline{\lambda}C^{\prime\prime}(\widetilde{x})(\underline{x}-\overline{x}).
\end{equation}
Solving for the range of loans gives
\begin{equation}
   (\underline{x}-\overline{x})\leq \Big(\overline{\lambda}\frac{C^{\prime\prime}(\widetilde{x})}{C^\prime(\widetilde{x})}\Big)^{-1}.
\end{equation}
\end{proof}

\subsection{Proof of Lemma~\ref{lem:EqualEffort}}
\begin{proof}
If the posterior classification system is state-invariant at the optimum, then the posterior probability assigned to a high-risk firm after a good state must equal the posterior probability assigned after a bad state. Using Bayes' rule, this requires
\begin{equation}
  \frac{(1-\mu_g(\bar{e}))\bar{\lambda}(\pmb{c})}{(1-\mu_g(\bar{e}))\bar{\lambda}(\pmb{c})+(1-\mu_g(\underline{e}))\underline{\lambda}(\pmb{c})}
  =
  \frac{\mu_g(\bar{e})\bar{\lambda}(\pmb{c})}{\mu_g(\underline{e})\underline{\lambda}(\pmb{c})+\mu_g(\bar{e})\bar{\lambda}(\pmb{c})}.
\end{equation}
Cross-multiplication and cancellation of common positive prior terms imply $\mu_g(\bar{e})=\mu_g(\underline{e})$. Hence equal effort across states is necessary for minimizing welfare loss under this posterior-equality criterion.
\end{proof}

\subsection{Proof of Lemma~\ref{lem:MinimalWelfareLoss}}
\begin{proof}
Since $\lambda(\pmb{c})$ is a positive posterior misclassification probability, $W(\pmb{c})=0$ if and only if the bracketed term is zero. Hence
\begin{equation}
   A-2C^*(\check{x}(\pmb{c}))=0.
\end{equation}
Substituting the definition of $A$ gives
\begin{equation}
   C^*(\check{x}(\pmb{c}))=\frac{1}{2}\Big(\overline{\theta}V(\overline{x}-\overline{x}_0)+\underline{\theta}V(\underline{x}-\underline{x}_0)\Big).
\end{equation}
This is the minimal attainable welfare-loss benchmark.
\end{proof}

\subsection{Proof of Proposition~\ref{prop:MinimalWelfareLoss}}
\begin{proof}
Let
\begin{equation}
   \lambda(\pmb{c})=\frac{\exp(\pmb{c}^T\pmb{\beta})}{1+\exp(\pmb{c}^T\pmb{\beta})}.
\end{equation}
Then
\begin{equation}
   \frac{\partial\lambda(\pmb{c})}{\partial c_i}=\lambda(\pmb{c})\big(1-\lambda(\pmb{c})\big)\beta_i.
\end{equation}
Using $W(\pmb{c})=\big(A-2C(\check{x}(\pmb{c}))\big)\lambda(\pmb{c})$, the directional derivative of $W$ along the logit-index direction $d\pmb{c}=\pmb{\beta}\,dt$ is
\begin{equation}
   \frac{dW}{dt}
   =-2\lambda(\pmb{c})\sum_i \beta_i\frac{\partial C(\check{x}(\pmb{c}))}{\partial c_i}
   +\big(A-2C(\check{x}(\pmb{c}))\big)
      \lambda(\pmb{c})\big(1-\lambda(\pmb{c})\big)\sum_i\beta_i^2.
\end{equation}
At the zero-loss benchmark in Lemma~\ref{lem:MinimalWelfareLoss}, the second term vanishes. Since $\lambda(\pmb{c})>0$, the necessary first-order condition $dW/dt=0$ implies
\begin{equation}
   \sum_i\beta_i\frac{\partial C(\check{x}(\pmb{c}))}{\partial c_i}=0.
\end{equation}
\end{proof}

\clearpage
\phantomsection
\addcontentsline{toc}{section}{Internet Appendix}
\includepdf[pages=-,pagecommand={\thispagestyle{plain}}]{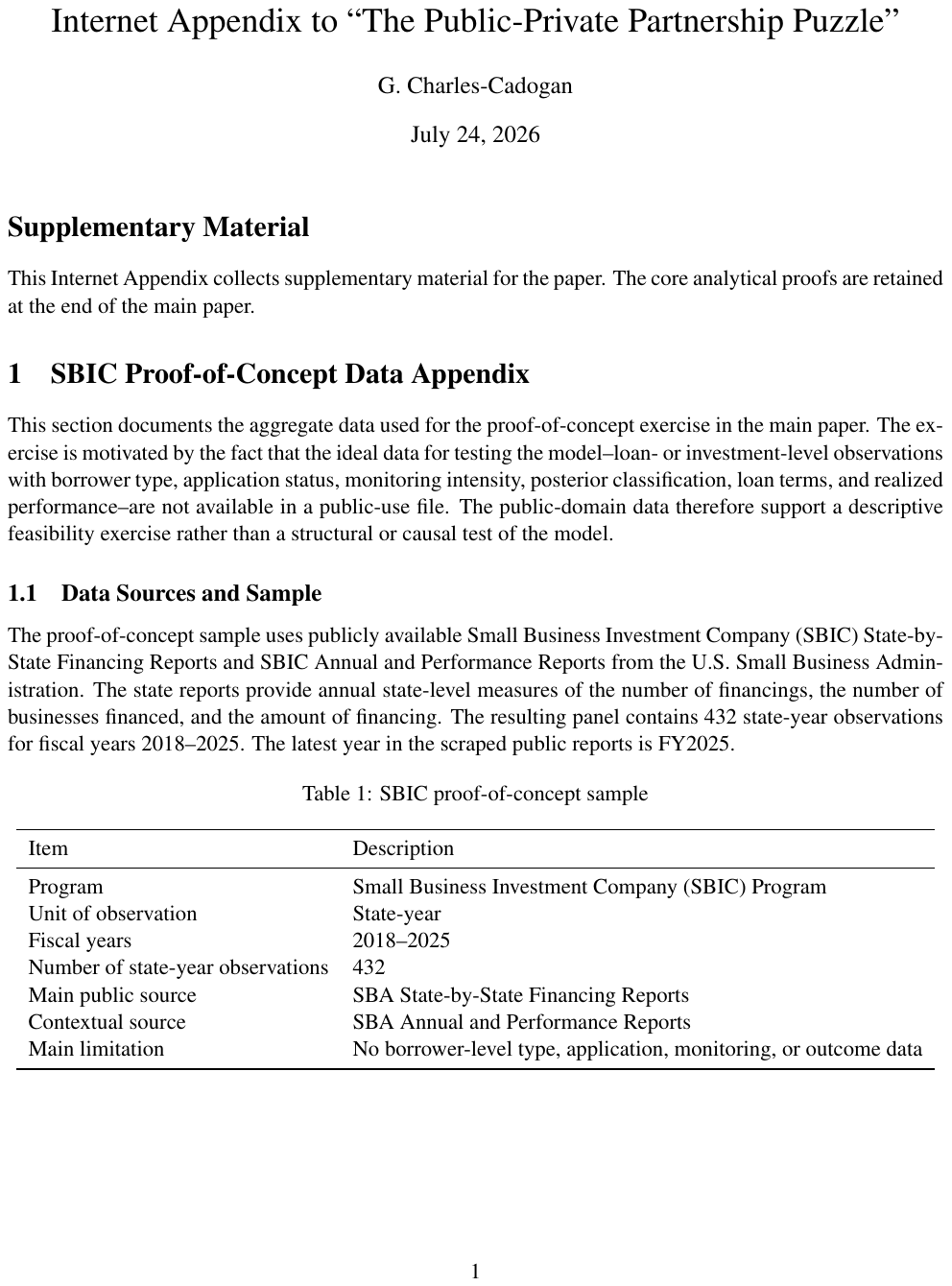}
\end{document}